\documentclass[review,onefignum,onetabnum]{siamonline190516-copy}
\usepackage{amsmath,amssymb,amsfonts,euscript,graphicx}

\theoremstyle{plain}
\newtheorem{remark}[theorem]{Remark}



\def \R{{\mathbb R}}
\def \N{{\mathbb N}}
\def \A{{\mathcal A}}
\def \B{{\mathcal B}}

\def \oc{{\operatorname{c}}}

\newcommand{\nihil}[1]{}

\begin{document}

\title{Minimizing the Repayment Cost of Federal Student Loans}

\author{Paolo Guasoni\thanks{Dublin City University, School of Mathematical Sciences, Glasnevin, Dublin 9, Ireland, 
Dipartimento di Statistica, Universit\`a di Bologna, Via Belle Arti 41, 40126 Bologna, Italy, 
email: \texttt{paolo.guasoni@dcu.ie}. Partially supported by SFI (16/SPP/3347 and 16/IA/4443).}
\and
Yu-Jui Huang\thanks{
University of Colorado, Department of Applied Mathematics, Boulder, CO 80309-0526, USA, email: \texttt{yujui.huang@colorado.edu}. Partially supported by NSF (DMS-2109002).}
}
\maketitle

\begin{abstract}
Federal student loans are fixed-rate debt contracts with three main special features: (i) borrowers can use income-driven schemes to make payments proportional to their income above subsistence, (ii) after several years of good standing, the remaining balance is forgiven but taxed as ordinary income, and (iii) accrued interest is simple, i.e., not capitalized. 
For a very small loan, the cost-minimizing repayment strategy dictates maximum payments until full repayment, forgoing both income-driven schemes and forgiveness. For a very large loan, the minimal payments allowed by income-driven schemes are optimal. For intermediate balances, the optimal repayment strategy may entail an initial period of minimum payments to exploit the non-capitalization of accrued interest, but when the principal is being reimbursed maximal payments always precede minimum payments.
Income-driven schemes and simple accrued interest mostly benefit borrowers with very large balances.
\end{abstract}

\textbf{MSC (2010):} 91G20, 91G80.

\textbf{Keywords:} student loans, repayment, compounding, loan forgiveness.

\thispagestyle{empty}



\section{Introduction}

Student loans today are a leading component of non-mortgage household debt, accounting for liabilities of over \$1.7 trillion, triple their size fifteen years ago, and exceeding both auto loans (\$1.3 trillion) and credit card balances (\$0.9 trillion).\footnote{See 
\url{https://fred.stlouisfed.org/series/BOGZ1FL153166220Q},
\url{https://fred.stlouisfed.org/series/BOGZ1FL153166400Q},
\url{https://fred.stlouisfed.org/series/BOGZ1FL153166110Q}} 
The bulk of such loans is issued by the federal government through various schemes, which give borrowers a plethora of repayment options, such as income-driven repayment, consolidation, deferral, forbearance, forgiveness -- even the recent pandemic suspension. 
The combination of all these features makes student loans very unusual debt contracts, more akin to financial derivatives than fixed-income products, and leaves individual borrowers with contracts that are extremely hard to understand and manage optimally, even for graduates with advanced degrees.

Federal loans make funds available to students to cover their tuition and living expenses. A few months after graduation or un-enrollment, students are responsible for repaying their debts, and here the repayment puzzle begins. 
Student loan balances grow at a national fixed rate (which depends on the origination date of the loan) and can be repaid in full over a fixed horizon (typically ten years), while a borrower is free to make additional payments at no penalty, similar to a mortgage.
Yet, student loans have three main peculiar features that distinguish them from other debt.

First, borrowers can enroll in income-driven repayment schemes, whereby monthly payments are due only if their income is above a certain subsistence threshold and are proportional to the amount by which it exceeds such threshold.
Second, after a number of years of qualifying payments (usually 20 or 25), the remaining balance is forgiven, but the forgiven amount counts as taxable income, hence generates a large final tax liability.\footnote{
The American Recovery Act passed in March 2021 made student-loan forgiveness tax-free through 2025, but not from 2026 onwards. An exception is the Public Sector Loan Forgivness Program, for which forgiveness is tax-free after ten years, without expiration.}
Third, if income-driven repayments are not enough to cover interest on the loan, accrued interest is not added to the principal balance, i.e., interest is not capitalized.\footnote{Capitalization of interest takes place at consolidation, after a period of deferment or forbearance, or at other specific events. For the Income-Contingent Repayment scheme, interest does capitalize annually, but only up to 10\% of the initial loan balance.}

These peculiar features create significant incentives to delay payments: income-driven schemes allow borrowers to minimize present payments and have the balance eventually forgiven. Simple (i.e., non-capitalized) interest implies that the outstanding balance is divided in two: the principal, which accumulates interest at the loan's rate, and accrued interest, which does not produce interest. 
The main countervailing incentive is the loan rate, which is typically higher than the borrower's discount rate and accumulates throughout the life of the loan, thereby leading to a larger tax liability. Also, forgiveness may not be relevant for a small loan, as minimum payments may extinguish its balance before it becomes eligible for forgiveness.

Overall, the complexity of student loan repayment options creates an intricate tradeoff between the benefits of late repayments (forgiveness and simple interest) and their costs (accrued interest), which depend on the loan size relative to income, the loan rate and the borrower's discount rate, the forgiveness horizon, and the income-tax rate. The goal of this paper is to understand this tradeoff and find the optimal repayment strategy for a borrower who wishes to minimize the loan's cost, i.e., the present value of future repayments. 

If the loan is so large (or the income so low) that even maximum payments cannot erode the principal, the optimal strategy is to make minimal payments indefinitely, thereby taking advantage of negative amortization, because simple interest is tantamount to a separate, interest-free loan on all accrued interest, which will also be forgiven. (Put differently, the total balance increases linearly, not exponentially, over time.)

If the loan is so small (or the income so high) that even minimum payments do cover interest, then non-capitalization is irrelevant, and the remaining tension is between forgiveness and compound interest. On one hand, a borrower is tempted to delay repayments until the loan is forgiven and only taxes on the forgiven balance are due. On the other hand, the loan rate is much higher than the borrower's discount rate; hence the cost of delaying payments increases exponentially with the forgiveness horizon, potentially offsetting its ostensible savings. 
If the remaining loan life is short, the benefits of forgiveness override interest costs, hence minimum payments are still optimal; otherwise, maximum payments are optimal up to a \emph{critical horizon}.

The critical horizon is the time at which the benefits of forgiveness equal interest costs: repaying an extra dollar today spares the borrower from paying taxes on its forgiven future value, i.e., principal plus interest. 
If both the loan and the tax rates are high, and forgiveness is far away, then it is best to maximize payments for some time, until the critical horizon, when the remaining life of the loan shortens to the point that the savings from extra payments are null. Then, minimal payments through income-driven schemes are optimal because forgiveness is so close that any additional payment increases the cost of the loan. 

Despite the sheer size of student-loan debt and its proximity to academia, the problem of finding cost-minimizing repayment strategies has not received much attention in the academic literature.
Although it is common knowledge that borrowers with large balances and a relatively low tax rate are better off enrolling in income-driven repayment plans, thereby paying the minimum required by the scheme, typical student resources tend to recommend continued enrollment in such schemes, rather than considering the possibility of a period of maximum payments, which we find to be optimal under certain conditions.

In the United States, student loans were first introduced in the postwar period, and their growth accelerated after the establishment of Sallie Mae in 1973. Student loans are now a mainstream scheme to finance higher education, and the Department of Education estimates that there are nearly 45 million of student debt holders in the US, with 2.5 million borrowers owing more than \$100,000 each.\footnote{ And each of the top hundred borrowers owes more than \$1 million, see \href{https://www.wsj.com/articles/mike-meru-has-1-million-in-student-loans-how-did-that-happen-1527252975}{https://www.wsj.com/articles/mike-meru-has-1-million-in-student-loans-how-did-that-happen-1527252975}.}
While student loans have the merit to expand access to higher education, research in the past decade has also brought potential demerits to light.
Recent empirical work finds that higher balances of student loans contribute to reducing home ownership \cite{mezza2020student}, inhibiting propensity to entrepreneurship \cite{krishnan2019cost} and public sector employment \cite{rothstein2011constrained}, delaying marriage \cite{gicheva2016student}, postponing parenthood \cite{shao2014debt} and enrollment in graduate or professional degrees \cite{malcom2012impact,zhang2013effects}, and increasing the cohabitation with parents \cite{bleemer2014debt,dettling2018returning}.
Similarly controversial is the interaction between student loans and tuition: \cite{lucca2018credit} find that an increase in the subsidized loan maximum leads to a sticker-price increase in tuition of about 60 cents on the dollar, thereby suggesting that colleges (rather than students) may be the beneficiaries of a large fraction of government loan subsidies. (This is the so-called ``Bennett hypothesis'', named after William Bennet, who publicly formulated the link between student loans availability and tuition fees as Secretary of Education in 1987.)

This paper contributes to the understanding of student loans by finding the cheapest repayment strategy that accounts for income-driven repayment and forgiveness. When the loan is very large or very small, we also account for simple interest. For intermediate balances, the complete characterization of the optimal strategy with income-driven repayment, forgiveness, and simple interest remains an open problem that currently requires numerical optimization for specific parameter values.

We focus on the minimization of the cost of the loan because, \emph{a priori}, it is consistent with the maximization of the net worth of a household. \emph{A posteriori}, the cost-minimizing strategy also offers significant protection to negative shocks through income-driven repayment.
In principle, deviating from cost minimization would be justified when an alternative strategy would offer lower risk, but our result suggests that potential improvements in this regard may be rather limited. Indeed, the central risk reduction that can be achieved in student loans is through income-driven repayment schemes, which allow monthly payments to be proportional to income above subsistence, thereby partially hedging income fluctuations. However, our results show that enrollment in such schemes is already optimal for large loan balances (for which the potential risk reduction is largest), for the purpose of minimizing costs -- even neglecting their hedging potential. Put differently, income-driven schemes reduce both costs and risk, which means that the minimization of these two quantities is largely aligned. A partial tradeoff between cost and risk is present when the cheapest strategy entails a period of maximum payments. In this case, the borrower could postpone maximum payments at the price of increased repayment costs over the lifetime of the loan, while only reducing risk for the period by which maximum payments have been postponed.

Our findings rely on some assumptions that make the model tractable and its solution accessible. Our model is intentionally deterministic, for two reasons: First, and foremost, deterministic analysis is more accessible, and is in fact central to students' decision making, as attested by the typical online comparison tools, which do not entertain randomness.\footnote{See 
\href{https://studentloanhero.com/calculators/}{https://studentloanhero.com/calculators/},
\href{https://smartasset.com/student-loans/student-loan-calculator}{https://smartasset.com/student-loans/student-loan-calculator}, and
\href{https://www.calculator.net/student-loan-calculator.html}{https://www.calculator.net/student-loan-calculator.html} among others. }
Second, the scarcity of literature on optimal repayment of student loans suggests that this problem should be first understood in the simplest form that specifies its central elements. This paper offers such an analysis: it extends the results in \cite{GH21}, which considered the effects of income-driven repayment and loan-forgiveness, by including also simple interest, which is an important aspect of loans with large balances. (We are very grateful to Erik Kroll \cite{kroll2021} for bringing this feature to our attention.)

In addition, we do not model certain actions that may be available to borrowers, such as deferment, forbearance, consolidation, delinquency, death, or refinancing through a private loan. Some of these actions are possible only when certain events occur in the borrower's life. Others, such as delinquency, tend to increase student-loan liabilities and reduce access to credit, while private-loan refinancing entails forgoing income-driven repayments and forgiveness. 

The issue of delinquency and default deserves some discussion. Student loans, unlike other unsecured debt such as credit card balances, cannot be discharged in bankruptcy except in very rare circumstances \cite{mueller2019rise}, while borrowers' wages can be garnished for life. As a result, delinquency on student loans does not reduce the borrower's liabilities: instead, it adds collection fees to the loan's balance and significantly reduces access to credit by impairing the debtor's credit score. In addition, a borrower with subsistence income (or no income at all) can remain in good standing \emph{without making payments} by enrolling in income-driven repayment schemes, thereby avoiding delinquency at no cost. 
Empirical work on student loan defaults confirms that defaults are difficult to reconcile with borrowers' optimal choices, and may be due to borrowers' insufficient information about their options \cite{delisle2018federal}. Using individually identifiable information on student loan borrowers, \cite{cornaggia2020mismanages} find that \emph{``the majority of distressed student borrowers have their loans in disadvantageous repayment plans even when eligible for more advantageous options''}. In particular, \cite{looney2019useful} find that over 30\% of student loans of \$5,000 or \emph{less} are in default, even though they would be paid in full in ten years with monthly payments below \$100 (and without income-driven schemes). Also, delinquencies tend to decrease as loan balances increase, contrary to the incentives of strategic default. For these reasons, the model in this paper does not entertain delinquency, as its aim is to identify optimal repayment strategies rather than explain observed defaults. Put differently, our focus is normative rather than positive.

The tax treatment of student loans is also worth mentioning. Unlike mortgage interest, which is fully tax-deductible, student loan interest is deductible up to \$2,500 and only if the borrower's income is below \$70,000 (the deduction is fully phased out above \$85,000).\footnote{See \href{https://www.irs.gov/pub/irs-pdf/p970.pdf}{https://www.irs.gov/pub/irs-pdf/p970.pdf}.} Thus, the tax benefit varies across taxpayers, but for each taxpayer with a minimal loan amount such benefit is virtually constant across strategies, and therefore has no marginal effect in the choice of the optimal strategy. For this reason, the model in this paper does not incorporate the tax benefit explicitly.

Finally, it is worth discussing how the prospect of partial debt cancellation may affect present repayment strategies. The model in the paper does not include this feature explicitly, because the sheer uncertainty on the details of cancellation policies would force any quantitative analysis to rely on arbitrary assumptions. Qualitatively, however, the possibility of debt cancellation is equivalent, \emph{on average} to a lower student-loan rate, with the reduction equal to the probability of cancellation per unit of time, multiplied by the fraction of debt canceled.\footnote{This insight is in analogy to the hazard and recovery rates in the valuation of defaultable bonds.} 
As a result, the prospect of cancellation tends to bolster minimum payments in the early stages of repayment. 
 
The rest of this paper is organized as follows: Section \ref{sec:main} describes in detail the statement of the main result and discusses its quantitative implications. 
Section \ref{sec:simple} discusses the effect of simple accrued interest, obtaining the optimal repayment strategy for very large or small balances, and deriving some necessary conditions for optimality.
Section \ref{sec:proofs} contains the rigorous mathematical proof of the main result in \cite{GH21} -- the case of capitalized interest, which is relevant for small loan balances -- first reducing the search for optimal strategies to a class of max-min strategies, and then identifying the optimal ones within this class. Section \ref{sec:proofsnew} contains new results that account for simple interest, characterizing the optimal repayment strategy for very large or very small balances, and offering two necessary conditions satisfied by optimal strategies. Concluding remarks are in Section \ref{sec:conc}.


\section{Model and Main Result}\label{sec:main}

This section presents the complete solution of the optimal repayment problem in the presence of forgiveness and income-driven schemes. The next section additionally accounts for simple accrued interest.

A student graduates with a loan balance of $x>0$ and seeks the repayment strategy $\alpha$ that minimizes the present value of future payments, discounted at some rate $r>0$, which represents the opportunity cost of money, i.e., the alternative safe return that could be obtained on any dollar used to pay off the loan. For example, a household with a mortgage may ponder whether to increase mortgage or student loan payments, hence the mortgage rate is a close approximation to the household's discount rate. For a household without other debt, the discount rate represents the return on a safe investment. 

The loan carries an interest rate of $r+\beta$, higher than the discount rate (i.e., $\beta>0$), which means that paying off the loan earlier entails lower compounding costs.\footnote{The case of a household with debt that carries a higher interest than student loans, such as credit card debt, is somewhat trivial, as the borrower's optimal policy is to pay off such debt first. Thus, we focus on the case of a positive spread $\beta$.}
Thus, 
denoting by $\alpha_t$ the chosen repayment rate at time $t$, the loan balance $b^\alpha_t$ evolves over time according to the dynamics
\begin{align}\label{eq:1}
db^\alpha_t = (r+\beta)b^\alpha_t dt - \alpha_t dt, \quad b_0 = x>0
.
\end{align}

The student loan also includes a forgiveness provision, whereby at some future horizon $T>0$ the remaining balance $b^\alpha_T$ of the loan is forgiven, but then taxed at rate $\omega\in (0,1)$, whence a payment of $\omega b^\alpha_T$ is due at time $T$. Such a provision encourages delaying payments as the forgiveness horizon approaches, thereby countering the compounding motive. 

The payment rate at time $t$ is constrained to the range $m(t)$ to $M(t)$, which depends on the former student's income, with $m(t)$ reflecting the minimum payment due under income-driven repayments, and $M(t)$ the maximum payment that accommodates other living expenses without incurring debt, such as credit card balances, which carry a higher rate than that of student loans.
Specifically, for any $x>0$ and Lebesgue measurable $\alpha:[0,T]\to [0,\infty)$, the present value of future payments is 
\begin{equation}\label{J}
J(x,\alpha) := \int_{0}^{\tau} e^{-rt} \alpha_t dt + e^{-r\tau} \omega b_{\tau},
\end{equation}
where 
\begin{align}\label{tau}
\tau:= \inf\{t \geq 0 : b_t = 0\} \wedge T
\end{align}
is the time when the loan is either paid in full or forgiven. The goal is to minimize the present value of future payments, i.e.,
\begin{equation}\label{v}
v(x) := \inf_{\alpha\in \mathcal A} J(x, \alpha), 
\end{equation}
where the set of admissible repayment strategies is defined as
\[
\mathcal A := \{ \alpha:\text{$t\mapsto \alpha_t$ is Lebesgue measurable with } m(t) \leq \alpha_t \leq M(t)\ \hbox{for}\ 0\le t\le T\},
\]
for some Lebesgue integrable $m,M:[0,T]\to(0,\infty)$ satisfying $m(t)<M(t)$ for all $t\in[0,T]$.
The main result describes the optimal repayment strategy in relation to the loan's parameters:
\begin{theorem}\label{th:main}
Define the critical horizon as
\begin{equation}\label{eq:critical}
t_\oc := \left(T + \frac{\log {\omega}}{\beta}\right)^+\in[0,T)
,
\end{equation}
and let
$
x^* := \int_{0}^{t^*} e^{-(r+\beta)s}  M(s) ds>0,
$
where the time $t^*\in (t_\oc,T)$ is the unique solution to 
\begin{equation}\label{t_2^*}
\int_{t_\oc}^{t^*} e^{-rs}  M(s) (1-\omega e^{\beta(T-s)}) ds = \int_{t_\oc}^{T} e^{-rs}  m(s)  (1-\omega e^{\beta(T-s)}) ds
.
\end{equation}
Then, for any $x>0$, the strategy $\alpha^*\in\mathcal A$ defined as
\[
 \alpha^*_t := 
 \begin{cases}
 M(t) 1_{[0,t_\oc]}(t)+m(t) 1_{(t_\oc,T]}(t) & {t\in[0,T],\quad \hbox{if}\ x>  x^*},\qquad \text{(max-min)}\\
 M(t)\hspace{1.65in} & {t\in[0,T],\quad \hbox{if}\ x\le  x^*},\qquad \text{(max)}\\
 \end{cases}
\]
attains the minimum loan value. Also, $v(x) = v_1(x)$ for $x>  x^*$ and $v(x) = v_2(x)$ for $x\le x^*$, where
\begin{align}
v_1(x) &:= \!\!\int_{0}^{t_\oc}\!\!\!\!  e^{-rs} M(s) ds+ \!\!\int_{t_\oc}^{T}\!\!\!\!  e^{-rs}  m(s) ds \! + \omega e^{\beta T}\left(x\!\!  -\!\!  \int_{0}^{t_\oc}\!\!\!\!  e^{-(r+\beta)s}  M(s) ds \! - \!\!\int_{t_\oc}^{T}\!\!\!\!  e^{-(r+\beta)s}  m(s) ds \right), \label{v_1}\\
v_2(x) &:=  \int_{0}^{t_M}\!\!\!\!  e^{-rs}  M(s) ds,
\quad\text{where $t_M>0$ satisfies }\ x=  \int_{0}^{t_M} e^{-(r+\beta)s}  M(s) ds.\label{v_2}
\end{align}
\end{theorem}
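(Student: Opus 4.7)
The plan is to exploit the explicit solution of the linear balance equation \eqref{eq:1} to rewrite $J(x,\alpha)$ as a linear functional of $\alpha$ (up to whether the loan is paid off early), to reduce the infinite-dimensional optimization to a comparison between a single max--min candidate and a single pure-max candidate, and finally to locate the crossover balance $x^{*}$.

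First I solve \eqref{eq:1} to obtain $b^{\alpha}_t=e^{(r+\beta)t}(x-y^{\alpha}(t))$ with $y^{\alpha}(t):=\int_0^t e^{-(r+\beta)s}\alpha_s\,ds$, and split according to $\tau^{\alpha}$. In \emph{Case I} (no early payoff, $\tau^{\alpha}=T$), substitution into \eqref{J} gives the linear form
\[
J(x,\alpha)=\omega e^{\beta T}x+\int_0^T e^{-rt}\alpha_t\bigl(1-\omega e^{\beta(T-t)}\bigr)\,dt,
\]
while in \emph{Case II} ($\tau^{\alpha}<T$) the relation $b^{\alpha}_{\tau^{\alpha}}=0$ forces the budget identity $y^{\alpha}(\tau^{\alpha})=x$ and $J(x,\alpha)=\int_0^{\tau^{\alpha}}e^{-rt}\alpha_t\,dt$.

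In Case I the functional is linear in $\alpha_t$ with coefficient $1-\omega e^{\beta(T-t)}$, which by \eqref{eq:critical} is negative on $[0,t_\oc)$ and positive on $(t_\oc,T]$; pointwise minimization on $[m(t),M(t)]$ yields the max--min strategy of the theorem, with value $v_1(x)$, and this strategy genuinely stays in Case I iff $x\ge\bar x:=\int_0^{t_\oc}e^{-(r+\beta)s}M\,ds+\int_{t_\oc}^T e^{-(r+\beta)s}m\,ds$. In Case II the change of variable $y=y^{\alpha}(t)$ (strictly increasing, since $\alpha\ge m>0$) rewrites the cost as $\int_0^x e^{\beta t(y)}\,dy$, where $t(\cdot)$ is the inverse of $y^{\alpha}$; the pointwise inequality $y^{\alpha}(t)\le\int_0^t e^{-(r+\beta)s}M(s)\,ds$ transfers to the inverses as $t(y)\ge t_M(y)$, producing the sharp lower bound $J(x,\alpha)\ge\int_0^x e^{\beta t_M(y)}\,dy=v_2(x)$, attained uniquely by pure max (and feasible when $x\le\int_0^T e^{-(r+\beta)s}M\,ds$). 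To compare, I note that $v_1$ is affine with slope $\omega e^{\beta T}=e^{\beta t_\oc}$, whereas $v_2$ is convex with $v_2'(x)=e^{\beta t_M(x)}$. A short algebraic manipulation using $x^{*}=\int_0^{t^{*}}e^{-(r+\beta)s}M\,ds$ shows that $v_1(x^{*})=v_2(x^{*})$ is precisely \eqref{t_2^*}; the left-hand side of \eqref{t_2^*}, viewed as a function of $t^{*}$, vanishes at $t_\oc$, strictly increases on $(t_\oc,T]$ (since $1-\omega e^{\beta(T-s)}>0$ there) and at $t^{*}=T$ exceeds the right-hand side (since $M>m$), so a unique $t^{*}\in(t_\oc,T)$ exists. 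Because $t^{*}>t_\oc$, we have $v_2'(x^{*})>v_1'$, so the convex $v_2-v_1$ touches zero at $x^{*}$ with positive slope: it is negative to the left and positive to the right on the relevant range. The same substitution inequality used in Case II, applied on $[t_\oc,T]$, also yields $v_1(\bar x)>v_2(\bar x)$, which forces $\bar x<x^{*}$, so that for every $x>x^{*}$ the max--min strategy is genuinely in Case I and attains $v_1(x)\le v_2(x)$, while for every $x\le x^{*}$ pure max is in Case II and attains $v_2(x)\le v_1(x)$.

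The hard part is controlling the range $x<\bar x$, where the max--min strategy itself lies in Case II (so its actual cost is strictly less than the \emph{formula} $v_1(x)$) and thus $v_1$ is not a tight lower bound on the Case I cost. There one must argue via a Lagrangian that the best Case I strategy is still of max--min type, but now with a smaller switching time $\tilde t(x)<t_\oc$ chosen so that $b_T=0$, and then apply the substitution inequality once more to show that its cost still exceeds $v_2(x)$. Uniformly reconciling these three regimes ($x<\bar x$, $\bar x\le x\le x^{*}$, $x>x^{*}$) and verifying the ordering $\bar x<x^{*}$ constitute the technical core of the proof.
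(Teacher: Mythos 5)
Your architecture is sound and in places genuinely different from the paper's. The paper first proves a global reduction lemma (Lemma~\ref{pro:2.1}): any admissible $\alpha$ can be exchanged for a max-then-min strategy with the same value of $\int_0^\tau e^{-(r+\beta)s}\alpha_s\,ds$ (hence the same $\tau$ and terminal balance) at strictly lower cost; it then optimizes over the single switching time in three balance regimes (Propositions~\ref{prop:x large}, \ref{prop:x small}, \ref{prop:x intermediate}) and compares $v_1$ with $v_2$ (Lemma~\ref{lem:v1-v2}). You instead bound the two cases directly. For $\tau=T$ your affine rewriting of $J$ with weight $e^{-rt}(1-\omega e^{\beta(T-t)})$ and pointwise minimization recovers the lower bound $v_1(x)$ immediately. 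For $\tau<T$ your substitution $y=\int_0^t e^{-(r+\beta)s}\alpha_s\,ds$, giving $J=\int_0^x e^{\beta t(y)}\,dy$ and the comparison of inverse functions, is a cleaner derivation of $J\ge v_2(x)$ than the paper's computation $g'<0$ along the one-parameter family in Proposition~\ref{prop:x small}. Your analysis of $t^*$, of the sign of $v_2-v_1$, and of the ordering $x_\oc<x^*$ matches Lemma~\ref{lem:v1-v2} in substance. (One small repair: convexity of $h:=v_2-v_1$ together with $h(x^*)=0$ and $h'(x^*)>0$ gives $h>0$ to the right but not $h<0$ to the left; you need the additional observation that $h'(x)=e^{\beta t_M(x)}-\omega e^{\beta T}>0$ on all of $(\hat x,x^*)$, which holds since $t_M(x)>t_\oc$ there, and which suffices on the range $(x_\oc,\overline x]$ where the comparison is actually invoked.)

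The genuine gap is the step you explicitly defer: the regime $\underline x<x<x_\oc$, where Case-I strategies exist but the pointwise minimizer of the Case-I functional violates the feasibility constraint $\int_0^T e^{-(r+\beta)s}\alpha_s\,ds\le x$, so $v_1(x)$ is a non-tight lower bound and cannot be played against $v_2(x)$. This is not a peripheral technicality --- it is exactly where the paper's Lemma~\ref{pro:2.1} does its work. Your ``Lagrangian'' is in substance a bathtub/exchange argument: after substituting $u_t=e^{-(r+\beta)t}\alpha_t$, one minimizes $\int_0^T u_t(e^{\beta t}-\omega e^{\beta T})\,dt$ over $e^{-(r+\beta)t}m(t)\le u_t\le e^{-(r+\beta)t}M(t)$ with $\int_0^T u_t\,dt\le x$, and the strictly increasing weight forces the bang-bang max-then-min structure with the switch at the $\tilde t<t_\oc$ that makes the budget exactly $x$ (equivalently $b_T=0$); your Case-II bound then finishes, since that strategy is not pure max. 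Carrying this out requires precisely the exchange computation of \eqref{t_0}--\eqref{<} in the paper, so you have not avoided the paper's key lemma, only postponed it. As written, the proposal is a correct and partly more elegant plan with one essential lemma asserted rather than proved.
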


\begin{figure}
\centering
\includegraphics[width=.49\textwidth]{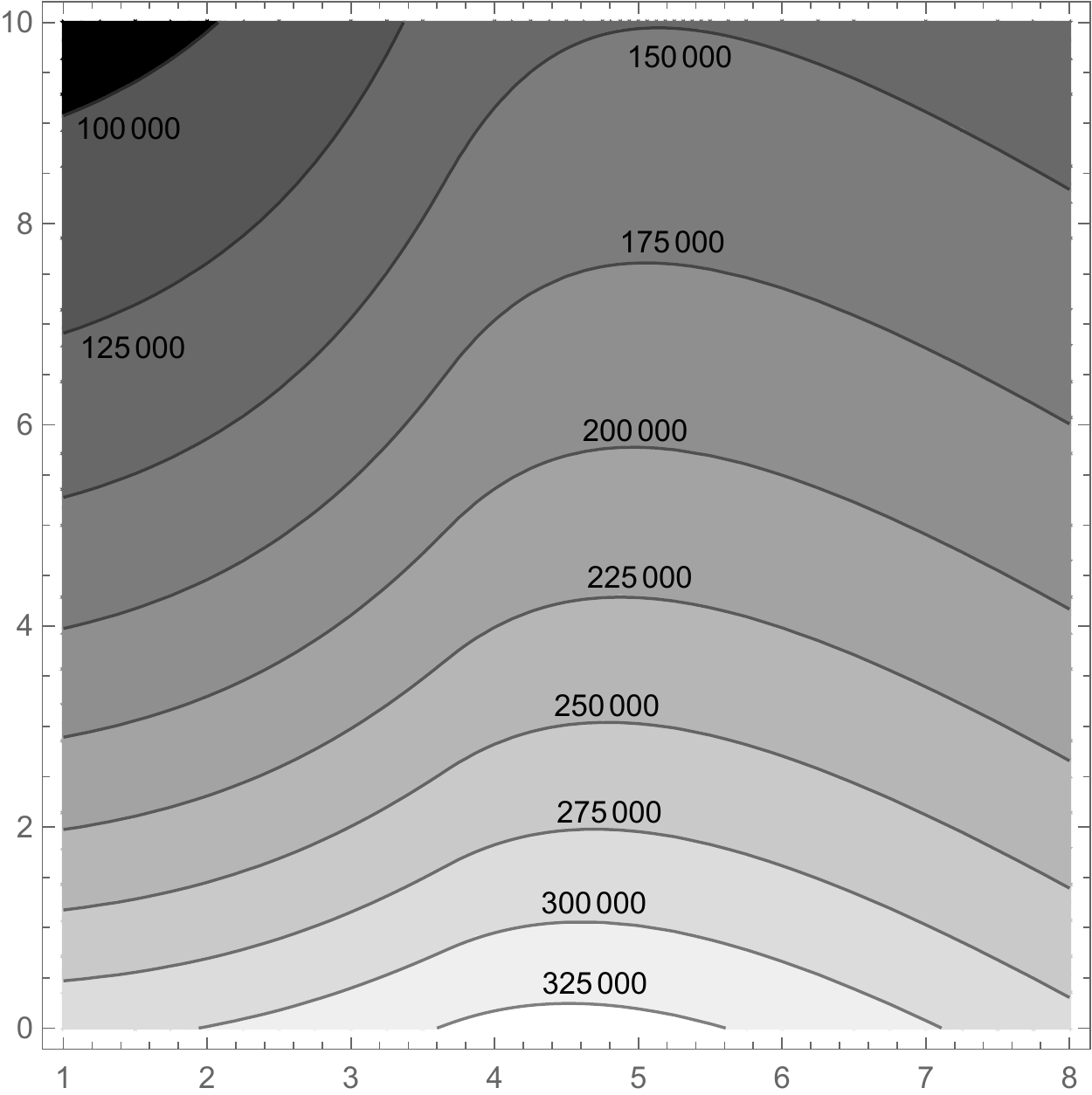}
\includegraphics[width=.49\textwidth]{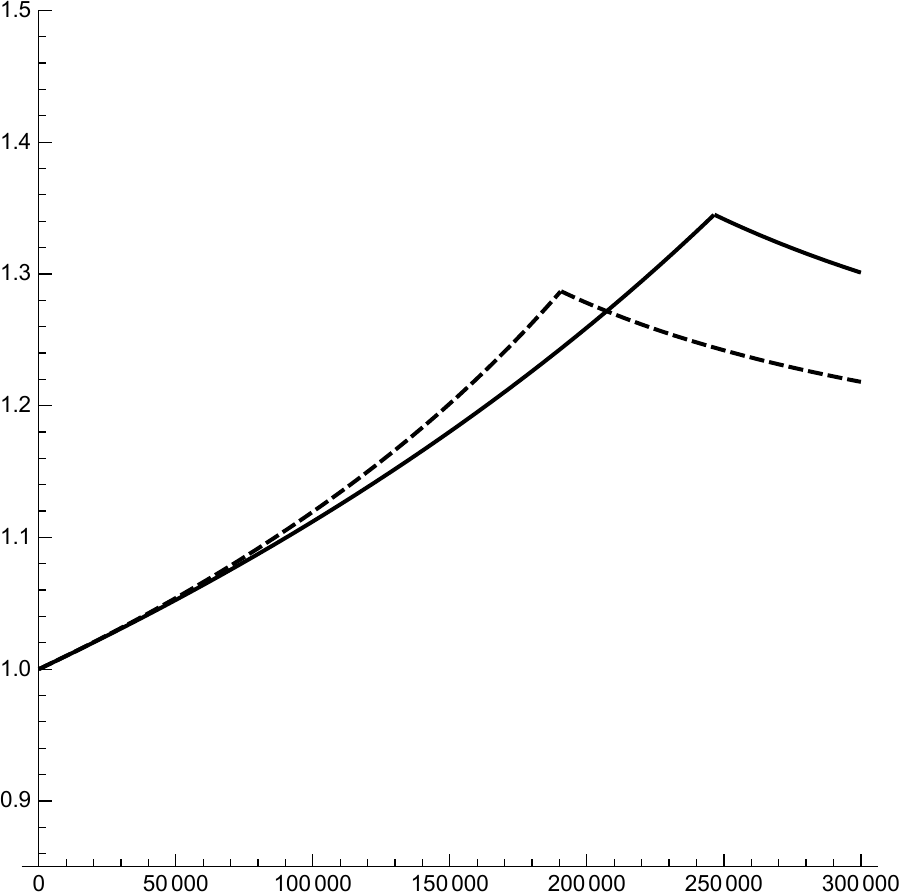}
\caption{\label{fig:xs}
Left: Critical balance $x^*$ (contours), above which the max-min strategy is cheaper than the max repayment strategy, against loan spread $\beta$ (horizontal, in percent) and discount rate (vertical, in percent). 
Right: Cost-to-balance ratio (vertical) against loan balance (horizontal) for PLUS loans (7.54\% rate), with discount rate of 3\% (solid) and 6\% (dashed).
Parameters: forgiveness horizon $T=25$, annual growth of income and poverty level $g=4\%$, tax rate $\omega=40\%$, minimum and maximum payments are 10\% and 30\% of income above subsistence of \$32,000.
}
\end{figure}

The message of this result is straightforward: the cheapest repayment strategy mandates maximum payments when the initial balance is sufficiently low ($x<x^*$, ``max'' strategy). Otherwise ($x>x^*$, ``max-min'' strategy), maximum payments are in order before the critical horizon $t_\oc$ in \eqref{eq:critical}, at which point enrollment in the income-driven scheme takes place, implying minimum payments thereafter. If the critical horizon is zero (for example, if either the tax rate or the interest rate spread are very low), then enrollment is immediate, and minimum payments span the entire life of the loan (i.e., the ``max-min'' boils down to ``min''). The {\it critical balance} $x^*$ that separates these two regimes is precisely the unique balance which yields the same repayment cost under both strategies.

The left panel of figure \ref{fig:xs} displays the critical balance $x^*$ as a function of the borrower's discount rate $r$ and the loan spread $\beta$, and shows that its dependence on these rates is highly nonlinear. The critical balance is particularly sensitive to the discount rate $r$, with low rates making it optimal to repay large balances early, and high discount rates encouraging deferral. The intuition is clear: a borrower with a higher opportunity-cost of capital has a stronger preference to later payments because they entail a lower sacrifice in return. 

Note also that the ostensible complexity of calculating the critical balance $x^*$ is not a significant barrier for a borrower who wishes to choose the cheapest repayment strategy: in practice, the borrower only needs to compare the cost of the max and max-min strategies, choosing the cheaper one among these two. An important corollary of this result is that only large loan balances, those above $x^*$, benefit from income-driven repayment schemes. Instead, smaller balances should be paid off as early as possible through maximum payments. 

To better understand this issue, the right panel of figure \ref{fig:xs} plots the cost-to-balance ratio for PLUS loans for discount rates $3\%$ and $6\%$, representative of borrowers with different credit scores.
Discount rates have a minor impact on the valuation of small loan balances, leading to noticeable differences only after enrollment in income-driven schemes becomes optimal. Indeed, a higher discount rate lowers the enrollment threshold $x^*$, significantly decreasing the borrowing cost per unit of balance. Once such threshold is exceeded, the marginal cost of any additional borrowed dollar is exactly $ \omega e^{\beta T}$ and the additional balance neither affects payments in the ``max'' nor in the ``min'' periods of the loan. 

In summary, the marginal cost of borrowing increases with the balance until enrollment in income-driven repayment becomes optimal. At that point, the marginal cost of additional borrowing drops to the constant $\omega e^{\beta T}$, as the average cost of borrowing gradually declines to the same constant. Thus, an implication of income-driven repayment schemes is that the average unit cost of  borrowing is higher for medium balances than it is for very high balances.
As the next section shows, this conclusion is made even stronger by the practice of non-capitalization of interest, which further reduces the cost of borrowing for large balances.

\section{Simple Interest}\label{sec:simple}
The non-capitalization of interest, which is typical of most student loans in income-driven repayment plans, introduces an additional complication for the optimization problem, as it requires keeping track of the amount of unpaid principal $p^\alpha_t$ in addition to the total balance $b^\alpha_t$ (i.e., principal plus accrued interest). Because the principal can be repaid only after any accrued interest has been paid off, the dynamics of the loan is described by:
\begin{align}
d b^\alpha_t &= ((r + \beta) p^\alpha_t - \alpha_t)dt,\quad b_0 = x>0;\label{b}\\
p^\alpha_t &= \inf_{0\le s\le t} b^\alpha_s.\label{p}
\end{align}
The first equation states that the loan balance increases by the loan rate times the remaining principal, as prescribed by simple interest with negative amortization, and decreases at the current repayment rate. The second equation identifies current principal as the running minimum of the loan balance, reflecting the priority of accrued interest relative to the principal.

To understand the impact of simple interest, it is useful to introduce the following definition.
For any $x>0$ and a strategy $\alpha\in\A$, define the first time of principal repayment as
\begin{equation}\label{t_alpha}
\theta(\alpha) := \inf\left\{t \in[0,T] : p^\alpha_t < x\right\}
,
\end{equation}
following the convention that $\inf\emptyset =T$. For $\alpha_s = M(s)$ (resp.\ $m(s)$) for all $s\in [0,T]$, the corresponding times are denoted by $\theta(M)$ (resp. $\theta(m)$) for $\theta(\alpha)$. Hence, \eqref{b}-\eqref{p} imply that
\begin{equation}\label{p=x}
p^\alpha_t = x\quad \hbox{for}\ t\in [0,\theta(\alpha)].
\end{equation}
That is, an admissible strategy $\alpha\in\A$ repays only the interest portion of the loan by time $\theta(\alpha)$. At time $\theta(\alpha)$, all previously unpaid interest has finally been paid off and the principal portion of the loan is next in line for repayment.

The next Lemma shows how to improve a strategy, depending on the value of $\theta(\alpha)$:
\begin{lemma}\label{lem:reduce to B}
For any $x>0$ and $\alpha\in\A$, 
\begin{itemize}
\item [(i)] 
if $\theta(\alpha) =T$, then $J(x,m) \le J(x,\alpha)$.
\item [(ii)] 
if $\theta(\alpha) <T$, then there exists a unique $t_0\in [0,\theta(\alpha)]$ that satisfies
\begin{equation}\label{payment equal}
\int_{0}^{\theta(\alpha)}  \alpha_s ds = \int_{0}^{t_0} m(s) ds + \int_{t_0}^{\theta(\alpha)} M(s)ds.
\end{equation}
Moreover, $\overline\alpha\in\A$ defined by 
\begin{equation}\label{mMalpha}
\overline\alpha_t := m(t) 1_{[0,t_0]}(t) + M(t) 1_{(t_0,\theta(\alpha)]}+\alpha_t 1_{(\theta(\alpha),T]}(t),\quad \forall 0\le t\le T,  
\end{equation}
satisfies $\theta({\overline\alpha}) = \theta(\alpha)$ and $J(x,\overline\alpha)\le J(x,\alpha)$.
\end{itemize}
\end{lemma}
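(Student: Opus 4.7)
The argument hinges on the fact that on $[0,\theta(\alpha)]$ the principal stays at $x$, so integrating \eqref{b} yields
\[
b^\alpha_t = x + (r+\beta)x\,t - \int_0^t \alpha_s\,ds \quad \text{for } t\in[0,\theta(\alpha)],
\]
and when $\theta(\alpha)<T$, continuity of $b^\alpha$ forces $b^\alpha_{\theta(\alpha)}=x$, i.e.\ $\int_0^{\theta(\alpha)}\alpha_s\,ds=(r+\beta)x\,\theta(\alpha)$. This linearization is what makes the problem tractable, and the comparisons below will be driven by it.

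For part (i), if $\theta(\alpha)=T$ then $b^\alpha_t\ge x$ on $[0,T]$, so $\tau=T$. Since $m\le\alpha$ one gets $b^m_t\ge b^\alpha_t\ge x$, whence $\theta(m)=T$ as well. Writing $J(x,\alpha)=\int_0^T e^{-rt}\alpha_t\,dt+e^{-rT}\omega\bigl(x+(r+\beta)xT-\int_0^T\alpha_s\,ds\bigr)$ as a linear functional of $\alpha$ and subtracting the same expression for $m$,
\[
J(x,\alpha)-J(x,m) = \int_0^T (e^{-rt}-\omega e^{-rT})(\alpha_t-m(t))\,dt \ge 0,
\]
since $e^{-rt}\ge e^{-rT}>\omega e^{-rT}$ and $\alpha\ge m$.

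For part (ii), the plan is four-fold. First, existence and uniqueness of $t_0$ will follow from the intermediate value theorem applied to $\phi(t):=\int_0^t m(s)\,ds+\int_t^{\theta(\alpha)} M(s)\,ds$, which is continuous and strictly decreasing (as $m<M$) and satisfies $\phi(\theta(\alpha))\le\int_0^{\theta(\alpha)}\alpha_s\,ds\le\phi(0)$. Second, setting $g(t):=\int_0^t(\overline\alpha_s-\alpha_s)\,ds$, one checks $g(0)=g(\theta(\alpha))=0$ (the latter from \eqref{payment equal}), $g\le 0$ on $[0,t_0]$ (since $\overline\alpha=m\le\alpha$ there), and $g$ nondecreasing on $[t_0,\theta(\alpha)]$ (since $\overline\alpha=M\ge\alpha$ there), so $g\le 0$ throughout $[0,\theta(\alpha)]$. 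Third, to verify $\theta(\overline\alpha)=\theta(\alpha)$, hypothesize $p^{\overline\alpha}\equiv x$ on $[0,\theta(\alpha)]$; the dynamics then give $b^{\overline\alpha}_t=x+(r+\beta)xt-\int_0^t\overline\alpha_s\,ds \ge b^\alpha_t \ge x$ because $g\le 0$, self-validating the hypothesis, and $b^{\overline\alpha}_{\theta(\alpha)}=x$ since $g(\theta(\alpha))=0$; as $\overline\alpha=\alpha$ past $\theta(\alpha)$, uniqueness of the solution to \eqref{b}--\eqref{p} gives $b^{\overline\alpha}\equiv b^\alpha$ there, hence $\theta(\overline\alpha)=\theta(\alpha)$. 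Fourth, the post-$\theta(\alpha)$ contributions to $J$ coincide, and integration by parts (using $g(0)=g(\theta(\alpha))=0$) gives
\[
J(x,\overline\alpha)-J(x,\alpha) = \int_0^{\theta(\alpha)} e^{-rt}(\overline\alpha_t-\alpha_t)\,dt = r\int_0^{\theta(\alpha)} e^{-rt} g(t)\,dt \le 0.
\]

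The main obstacle lies in the third step: because $p^{\overline\alpha}$ is defined implicitly as the running minimum of $b^{\overline\alpha}$, one cannot simply substitute $p^{\overline\alpha}=x$; it is exactly the pointwise sign $g\le 0$ established in the second step that closes this self-referential loop and propagates the equality $\theta(\overline\alpha)=\theta(\alpha)$.
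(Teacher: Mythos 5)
Your proof is correct and shares the paper's skeleton --- the same $t_0$ obtained from the intermediate value theorem, the same rearranged strategy $\overline\alpha$, the linearized balance $b^\alpha_t=x+(r+\beta)xt-\int_0^t\alpha_s\,ds$ on $[0,\theta(\alpha)]$ together with $b^\alpha_{\theta(\alpha)}=x$, and the splitting of $J$ at $\theta(\alpha)$ --- but you verify the two key sub-claims by a genuinely different mechanism. You introduce the cumulative difference $g(t)=\int_0^t(\overline\alpha_s-\alpha_s)\,ds$ and let its sign do double duty: from $g\le 0$ on $[0,\theta(\alpha)]$ with $g(0)=g(\theta(\alpha))=0$ you get $b^{\overline\alpha}_t=b^\alpha_t-g(t)\ge x$ pointwise, which closes the self-referential definition of $p^{\overline\alpha}$ and yields $\theta(\overline\alpha)=\theta(\alpha)$ in one stroke, whereas the paper rules out $\theta(\overline\alpha)>\theta(\alpha)$ and $\theta(\overline\alpha)<\theta(\alpha)$ by two separate contradiction arguments with casework on whether $\theta(\overline\alpha)<t_0$; and integration by parts gives $\int_0^{\theta(\alpha)}e^{-rt}(\overline\alpha_t-\alpha_t)\,dt=r\int_0^{\theta(\alpha)}e^{-rt}g(t)\,dt\le 0$, whereas the paper sandwiches the discount factor at $e^{-rt_0}$ on either side of $t_0$ (its inequality \eqref{cost smaller}). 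Your route is shorter and unifies both verifications; the paper's avoids the bootstrap needed to substitute $p^{\overline\alpha}\equiv x$ into the dynamics, a point you correctly identify and resolve. One minor remark: in part (i) the comparison $b^m_t\ge b^\alpha_t$ implicitly uses the same bootstrap (one needs $p^m\equiv x$ before writing $b^m$ in closed form), but this is exactly the argument you supply in part (ii), so nothing is missing.
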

Part (ii) implies that any strategy that starts with negative amortization (i.e., for which $\theta(\alpha)>0$) can be improved by first minimizing and then maximizing payments before the principal is repaid. Furthermore, part (i) stipulates that, if the principal is never repaid before forgiveness, then permanent minimal payments reduce costs. As an immediate corollary, 
if a loan is so large that even maximal payments would never erode the principal, then minimal payments are optimal.

\begin{corollary}\label{cor:verylarge}
For any $x>0$, 
\begin{itemize}
\item [(i)] if $\theta(M)=T$, then $\alpha^*_t := m(t)$, $t\in[0,T]$ is optimal for \eqref{v}. 
\item [(ii)] if $\theta(m)=0$ and $m(t)$ is nondecreasing, then $\alpha^*$ in Theorem~\ref{th:main} is optimal for \eqref{v}.
\end{itemize}
\end{corollary}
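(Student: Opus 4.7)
The plan for Corollary~\ref{cor:verylarge} is to reduce each case to a setting already understood. In part~(i), the loan is so large that \emph{every} admissible strategy fails to touch the principal, so Lemma~\ref{lem:reduce to B}(i) immediately yields optimality of permanent minimum payments. In part~(ii), the loan is so small relative to $m$ that \emph{every} admissible strategy monotonically erodes the balance, so the simple-interest dynamics \eqref{b}--\eqref{p} collapse to the compound-interest equation \eqref{eq:1} and Theorem~\ref{th:main} applies verbatim.

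For part~(i), I would fix $\alpha\in\A$ and exploit that on $[0,\theta(\alpha)]$ one has $p^\alpha_t=p^M_t=x$, yielding
\begin{equation*}
b^\alpha_t-b^M_t=\int_0^t (M(s)-\alpha_s)\,ds\geq 0.
\end{equation*}
Since $\theta(M)=T$ is equivalent to $b^M_t\geq x$ throughout $[0,T]$, this forces $b^\alpha_t\geq x$ on $[0,\theta(\alpha)]$. A short continuity argument --- using that $b^M$ cannot dip below $x$ to obtain $\int_{\theta(\alpha)}^{\theta(\alpha)+h}((r+\beta)x-M(s))\,ds\geq 0$ for small $h$, and then bounding the corresponding integral for $\alpha$ with $\alpha\leq M$ --- will rule out $\theta(\alpha)<T$. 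Having $\theta(\alpha)=T$ for every admissible $\alpha$, Lemma~\ref{lem:reduce to B}(i) delivers $J(x,m)\leq J(x,\alpha)$, so $\alpha^*\equiv m$ is optimal.

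For part~(ii), the first step will be to convert the hypotheses into the pointwise bound $m(t)\geq (r+\beta)x$ on $(0,T]$. The defining condition $\theta(m)=0$ produces a sequence $s_n\downarrow 0$ with $b^m_{s_n}<x$, equivalently $\int_0^{s_n} m(u)\,du>(r+\beta)x\,s_n$; monotonicity of $m$ bounds its average on $[0,s_n]$ by $m(s_n)$, so $m(s_n)\geq (r+\beta)x$, and nondecrease propagates this to every $t\geq s_n$. Next, for any $\alpha\in\A$ and any $t$ with $b^\alpha_t>0$, one has $\alpha_t\geq m(t)\geq (r+\beta)x\geq (r+\beta)p^\alpha_t$, hence $b^\alpha$ is non-increasing, so $p^\alpha_t=b^\alpha_t$ and \eqref{b} collapses to $db^\alpha_t=((r+\beta)b^\alpha_t-\alpha_t)\,dt$, which is exactly \eqref{eq:1}. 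Because $J(x,\alpha)$ and $\tau$ in \eqref{J}--\eqref{tau} depend on $\alpha$ only through $b^\alpha$, the minimization \eqref{v} is the same problem as in Theorem~\ref{th:main}, whose minimizer $\alpha^*$ therefore transfers.

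The main obstacle, as indicated, is the extraction of the pointwise lower bound on $m$ in part~(ii): without monotonicity the passage from the integral condition $\theta(m)=0$ to a pointwise inequality would fail, and an edge case with $m(0^+)=(r+\beta)x$ requires a bit of care to ensure that $b^\alpha$ strictly decreases rather than stagnates (so that $p^\alpha=b^\alpha$ holds on the entire repayment interval). The comparison step in part~(i) is conceptually routine but needs the brief continuity argument to close at $\theta(\alpha)$; the remainder of each argument is essentially bookkeeping through Lemma~\ref{lem:reduce to B} and Theorem~\ref{th:main}.
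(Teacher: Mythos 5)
Your proposal is correct and follows essentially the same route as the paper: part (i) reduces to showing $\theta(\alpha)=T$ for every $\alpha\in\A$ by comparison with the maximal strategy and then invokes Lemma~\ref{lem:reduce to B}(i), while part (ii) uses $\theta(m)=0$ together with the monotonicity of $m$ to obtain the pointwise bound $m(t)\ge (r+\beta)x$, so every balance is non-increasing, $p^\alpha=b^\alpha$, the dynamics collapse to \eqref{eq:1}, and Theorem~\ref{th:main} applies. The only spot where you are marginally more explicit than the paper is the continuity argument closing part (i) (the paper simply asserts $\theta(\alpha)\ge\theta(M)$ by comparison), and your sketch of that step is at the same level of rigor as the published one.
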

The above result is relevant for those large loans, typical of graduate or professional degrees, which (i) carry a high interest rate, and (ii) do not result in immediate high earnings upon graduation. For such loans, the first few years necessarily result in negative amortization, accumulating a large balance of accrued interest that must be repaid before the principal. In such cases, even if subsequent increases in income would allow maximal payments to exceed interest, unless past accrued interest can also be repaid before forgiveness, then it is optimal to keep payments to the minimum for the entire life of the loan. (Hence such loans should be valued accordingly.)

Instead, when even the minimum payments of an income-driven scheme generates positive amortization, the analysis in the previous section applies because $p^\alpha_t = b^\alpha_t$ for all $\alpha\in\mathcal A$ and $t\in[0,T]$.

Lemma~\ref{lem:reduce to B} improves a strategy $\alpha\in\A$ by optimizing the repayment of accrued interest. The next Lemma, on the other hand, examines how to more efficiently repay the principal portion of the loan. Mathematically, it offers an additional necessary condition for optimality by stipulating that, when a strategy is repaying principal over a time interval $[a,c]$, on that interval maximum payments should always precede minimum payments.
\begin{lemma}\label{lem:p decreases}
Fix any $x>0$ and $\alpha\in\A$ with $\theta(\alpha) <T$. Suppose that there exist $a,c\in[\theta(\alpha),T]$ with $a<c$ such that $t\mapsto p^\alpha_t$ is strictly decreasing on $[a,c]$. If $\alpha\in\A$ does not belong to the collection
\[
\B_{[a,c]} := \{\alpha\in\A : \exists s_0\in [a, c]\ \hbox{s.t.}\ \alpha_t = M(t) 1_{[a, s_0]}+ m(t) 1_{(s_0,c]}(t)\ \hbox{for a.e.}\ t\in[a,c]\}, 
\]
then there exists $u\in(a,c)$ such that $\alpha_{(u)}\in\A$ defined by
\begin{equation}\label{alMmal}
(\alpha_{(u)})_t := \alpha_t 1_{[0,a]}(t) + M(t) 1_{(a,u]}(t) +m(t) 1_{(u,c]}(t) +\alpha_t 1_{(c,T]}(t) \quad \forall t\in[0,T]
\end{equation}
satisfies $J(x,\alpha_{(u)})< J(x,\alpha)$.
\end{lemma}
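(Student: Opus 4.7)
The plan is to choose $u\in(a,c)$ so that the balance at time $c$ under $\alpha_{(u)}$ matches $b^\alpha_c$, and then apply a rearrangement argument to conclude that the loan cost strictly decreases.

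Since $p^\alpha_t$ is strictly decreasing on $[a,c]$, the running-minimum definition \eqref{p} forces $b^\alpha_t=p^\alpha_t$ on $[a,c]$, so the balance obeys the linear ODE $db^\alpha_t=((r+\beta)b^\alpha_t-\alpha_t)\,dt$ there. The same equation holds for $\alpha_{(u)}$ on $(a,u]$, since paying $M\ge\alpha$ keeps $b^{\alpha_{(u)}}$ below its earlier running minimum. Assuming momentarily that the compound regime persists on $(u,c]$ as well, the requirement $b^{\alpha_{(u)}}_c=b^\alpha_c$ becomes
\[
\int_a^u e^{-(r+\beta)s}M(s)\,ds+\int_u^c e^{-(r+\beta)s}m(s)\,ds=\int_a^c e^{-(r+\beta)s}\alpha_s\,ds.
\]
The left-hand side is continuous and strictly increasing in $u$ (as $M>m$), running from $\int_a^c e^{-(r+\beta)s}m(s)\,ds$ at $u=a$ to $\int_a^c e^{-(r+\beta)s}M(s)\,ds$ at $u=c$. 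Since $\alpha\notin\B_{[a,c]}$ in particular rules out $\alpha$ being a.e.\ equal to $m$ or $M$ on $[a,c]$, the right-hand side lies strictly between these extremes, and the intermediate value theorem yields a unique $u\in(a,c)$.

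With $u$ fixed in this way, $\alpha_{(u)}$ agrees with $\alpha$ on $[0,a]\cup(c,T]$ and the states match at $a$ and $c$, so the two trajectories coincide on $[c,T]$ and share the same $\tau$. The cost difference reduces to
\[
J(x,\alpha)-J(x,\alpha_{(u)})=\int_a^c e^{-rs}g(s)\,ds,\qquad g(s):=\alpha_s-(\alpha_{(u)})_s.
\]
Subtracting $e^{\beta u}$ times the matching identity $\int_a^c e^{-(r+\beta)s}g(s)\,ds=0$ rewrites this as $\int_a^c(e^{\beta s}-e^{\beta u})e^{-(r+\beta)s}g(s)\,ds$. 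On $(a,u]$ both factors are nonpositive ($g\le 0$ since $\alpha\le M$, and $e^{\beta s}\le e^{\beta u}$), while on $(u,c]$ both are nonnegative. Hence the integrand is pointwise nonnegative, and it must be strictly positive on a set of positive measure, for otherwise $\alpha$ would equal $M$ a.e.\ on $(a,u]$ and $m$ a.e.\ on $(u,c]$, placing it in $\B_{[a,c]}$ and contradicting the hypothesis.

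The main obstacle is justifying that the compound regime $p^{\alpha_{(u)}}_t=b^{\alpha_{(u)}}_t$ genuinely persists throughout $[a,c]$ under $\alpha_{(u)}$. On $(a,u]$ this is automatic, but on $(u,c]$ the minimum payment $m$ could fail to cover interest on $b^{\alpha_{(u)}}_u$, triggering negative amortization and invalidating the explicit matching identity. This can be addressed either by a case analysis (if $p^{\alpha_{(u)}}$ freezes on part of $(u,c]$, replace the compound equation by the corresponding simple-interest one and redo IVT, whose monotonicity in $u$ is preserved), or by restricting attention to $u$ close enough to $c$ that $b^{\alpha_{(u)}}_u$ is small enough for $m$ to cover the interest, combined with a continuity argument to locate an admissible $u$ with $J(x,\alpha_{(u)})<J(x,\alpha)$.
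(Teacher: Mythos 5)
Your first two steps --- choosing $u$ by an intermediate-value argument applied to the compound-interest matching identity, and the rearrangement computation showing $\int_a^c(e^{\beta s}-e^{\beta u})e^{-(r+\beta)s}(\alpha_s-(\alpha_{(u)})_s)\,ds>0$ --- are exactly the paper's strategy (its equations for $s_0$ and the strict cost inequality), and that part of your argument is correct. The problem is that you have correctly diagnosed the central difficulty of the lemma and then stopped short of resolving it. The whole point of this result, as opposed to Lemma~\ref{pro:2.1}, is that after switching to $m$ on $(u,c]$ the strategy $\alpha_{(u)}$ may enter negative amortization, so the compound-interest formula you used to define $u$ no longer describes the true balance, and then neither the identity $b^{\alpha_{(u)}}_c=b^\alpha_c$ nor the coincidence of the trajectories (and of $\tau$ and the terminal tax term) on $[c,T]$ is justified. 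Flagging this as ``the main obstacle'' and offering two unexecuted sketches does not constitute a proof. Moreover, your second sketch (take $u$ close to $c$) does not work as stated: moving $u$ destroys the balance-matching at $c$, so the states at $c$ differ and the comparison of the costs on $[c,T]$, including the forgiveness tax $\omega b_\tau$, is lost. Your first sketch (redo the IVT with the simple-interest dynamics) is closer to what is needed, but the claim that ``monotonicity in $u$ is preserved'' is asserted, not proved, and monotonicity is not even what is required.

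For comparison, the paper closes the gap as follows. It first fixes $s_0$ by the compound matching identity, and then invokes the comparison $b^\alpha_t\le\fb^\alpha_t$ between the simple- and compound-interest balances (Remark~\ref{rem:b<fb}) together with the fact that the two dynamics coincide for the original $\alpha$ on $[a,c]$ (because $p^\alpha$ is strictly decreasing there) to conclude the one-sided bound $b^{y,(\alpha_{(s_0)})^a}_{c-a}\le b^{y,\alpha^a}_{c-a}$. If equality holds, the trajectories match from $c$ on and the cost inequality finishes the proof with $u=s_0$. If the inequality is strict, the paper observes that at the other extreme $u=a$ the modified strategy pays only $m$ on $(a,c]$ and hence ends with a balance \emph{strictly above} $b^{y,\alpha^a}_{c-a}$ (using that $\{\alpha>m\}$ has positive measure in $[a,c]$), so by continuity of $u\mapsto b^{y,(\alpha_{(u)})^a}_{c-a}$ there is some $u\in(a,s_0)$ at which the \emph{true} balances match at $c$; since decreasing $u$ below $s_0$ only shrinks the left-hand side of your cost inequality, the strict cost improvement survives. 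Only continuity, not monotonicity, of the balance in $u$ is needed. You would need to supply an argument of this kind (or a worked-out version of your first sketch) for the proof to be complete.
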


This lemma implies that, when an optimal strategy is in positive amortization (i.e., it is repaying principal), then payments should be first maximal and then minimal. Otherwise, it can be improved by altering repayment rates so that they are initially maximal and then minimal.
In particular, repaying a loan through a constant repayment rate cannot be optimal, unless such a constant happens to be the maximum payment that a borrower can afford. As constant repayments are the default for student-loans, this observation indicates that inaction is unlikely to be optimal for any borrower.

Finally, it is worth highlighting the stark difference that simple interest makes in the marginal cost of very large loans. As observed above, once a loan becomes large enough, the repayment strategy does not change, and the only marginal effect of a higher balance is through the increased tax liability at the forgiveness horizon. In the case of compound interest (Section~\ref{sec:main}), the present value of an additional dollar is then $\omega e^{-r T} e^{(r+\beta)T} = \omega e^{\beta T}$, which is insensitive to the discount rate $r$.

By contrast, in the case of simple interest the marginal cost of an additional dollar borrowed becomes $\omega e^{-r T} (1+(r+\beta)T)$, which reflects the present value of linearly increasing accrued interest, and depends on both the discount rate and the loan spread separately. The difference between these formulas can be very significant: for example, with a $r=3\%, \beta=4\%, \omega = 40\%, T = 25$, the
marginal cost is \$1.09 with compound interest, but it decreases to \$0.52 with simple interest. As the discount rate increases, simple interest becomes even more advantageous, making additional balances increasingly inconsequential.

\section{Proofs for Section~\ref{sec:main}}\label{sec:proofs}

This section contains the proof of Theorem~\ref{th:main} (the main result in \cite{GH21}), which identifies the cheapest repayment strategy in relation to the initial balance in the case of compound interest. 

First, Lemma \ref{pro:2.1} reduces the search for the optimal strategy to the class of strategies with maximum, followed by minimum payments (with the latter possibly absent). 
Next, Propositions~\ref{prop:x large}, \ref{prop:x small}, and \ref{prop:x intermediate} altogether demonstrate that the optimal strategy must be either $\alpha^{1}_t := M(t) 1_{[0,t_\oc]}(t)+m(t) 1_{(t_\oc,T]}(t)$ or $\alpha^{2}_t := M(t)$. Finally, Lemma~\ref{lem:v1-v2} compares the costs of $\alpha^1$ and $\alpha^2$, establishing Theorem~\ref{th:main} at the end of this section. 
The discussion begins by observing a simple expression for the remaining balance \eqref{eq:1} in terms of the initial balance and the discounted value of repayments.
\begin{remark}\label{thm:1.1}
For any measurable $\alpha:[0,T]\to[0,\infty)$, the unique solution to \eqref{eq:1} is 
\begin{equation}\label{b formula}
b_t= e^{(r+\beta)t}\left(x-\int_{0}^{t} e^{-(r+\beta)s} \alpha_s ds\right),\quad t\ge 0.
\end{equation}
Indeed, the claim follows by integrating the equality
\begin{align*}
d(e^{-(r+\beta)s} b_s) 
=e^{-(r+\beta)s} \{ -(r+\beta) b_s ds + db_s \}  = -e^{-(r+\beta)s} \alpha_s ds.
\end{align*}
\end{remark}


The next result shows that it is sufficient to consider repayment strategies of a very specific form: repaying first at the maximum rate $M(t)$ and then at the minimum rate $m(t)$. 
\begin{lemma}\label{pro:2.1}
For any $x>0$, 
$
v(x) =  \inf_{\alpha \in \mathcal B} J(x,\alpha),
$
 where 
\begin{align}\label{B}
\mathcal B &:= \{ \alpha\in\mathcal A: \exists\ t_0\ge 0\ \hbox{s.t.}\ \alpha_t= M(t) 1_{[0,t_0]}(t)+ m(t)1_{(t_0, T]}(t)\  \hbox{for a.e.}\ t\in[0,T]\}.
\end{align}
\end{lemma}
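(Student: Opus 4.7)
My plan is to construct, for each $\alpha\in\mathcal A$, a strategy $\tilde\alpha\in\mathcal B$ with $J(x,\tilde\alpha)\le J(x,\alpha)$. Because $\mathcal B\subset\mathcal A$, this immediately yields $v(x)=\inf_{\alpha\in\mathcal B} J(x,\alpha)$.

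Given $\alpha\in\mathcal A$, I would set $\tau:=\tau(\alpha)$ and look for $t_0\in[0,\tau]$ that ``matches the discounted payment'' of $\alpha$, namely,
\[
\int_0^{t_0} e^{-(r+\beta)s}M(s)\,ds+\int_{t_0}^{\tau} e^{-(r+\beta)s}m(s)\,ds=\int_0^{\tau} e^{-(r+\beta)s}\alpha_s\,ds.
\]
The left-hand side is continuous and strictly increasing in $t_0$ (since $M>m$) and ranges from $\int_0^\tau e^{-(r+\beta)s}m(s)\,ds$ to $\int_0^\tau e^{-(r+\beta)s}M(s)\,ds$, which brackets the right-hand side; the intermediate value theorem then delivers a unique such $t_0\in[0,\tau]$. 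I would then define $\tilde\alpha_s:=M(s) 1_{[0,t_0]}(s)+m(s) 1_{(t_0,T]}(s)\in\mathcal B$, and invoke Remark~\ref{thm:1.1} so that the matching condition forces the balances to agree at $\tau$: $b^{\tilde\alpha}_\tau=b^{\alpha}_\tau$.

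The crux of the argument is the rearrangement inequality $\int_0^\tau e^{-rs}\tilde\alpha_s\,ds\le\int_0^\tau e^{-rs}\alpha_s\,ds$. I would derive this by writing $e^{-rs}=e^{\beta s}e^{-(r+\beta)s}$ and subtracting $e^{\beta t_0}$ times the identically-vanishing matching integral, giving
\[
\int_0^\tau e^{-rs}(\tilde\alpha_s-\alpha_s)\,ds=\int_0^\tau (e^{\beta s}-e^{\beta t_0})\,e^{-(r+\beta)s}(\tilde\alpha_s-\alpha_s)\,ds.
\]
On $[0,t_0]$, $e^{\beta s}-e^{\beta t_0}\le 0$ while $\tilde\alpha_s-\alpha_s=M(s)-\alpha_s\ge 0$; on $(t_0,\tau]$ both signs flip; hence the integrand is pointwise nonpositive.

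To conclude, I would split into two sub-cases. If $\tau(\tilde\alpha)=\tau$, the matched balance gives the same forgiveness contribution, so the rearrangement inequality immediately yields $J(x,\tilde\alpha)\le J(x,\alpha)$. If $\tau(\tilde\alpha)<\tau$---which can occur only when $b^\alpha_\tau=0$, because once the balance touches zero the ODE forces it strictly negative thereafter---then $b^{\tilde\alpha}_{\tau(\tilde\alpha)}=0$ and
\[
J(x,\tilde\alpha)=\int_0^{\tau(\tilde\alpha)} e^{-rs}\tilde\alpha_s\,ds\le\int_0^\tau e^{-rs}\tilde\alpha_s\,ds\le\int_0^\tau e^{-rs}\alpha_s\,ds\le J(x,\alpha),
\]
using nonnegativity of $\tilde\alpha$ and of the forgiveness term. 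The main obstacle is this latter scenario, where the reshape terminates the loan earlier than $\alpha$ does and the construction appears to break symmetry; the resolution is that early termination can only lower cost further (the integrand is nonnegative and the forgiveness term vanishes), so the argument extends uniformly across both cases.
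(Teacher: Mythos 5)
Your proposal is correct and follows essentially the same route as the paper: you choose $t_0$ by matching the $(r+\beta)$-discounted payments of $\alpha$ with a max-then-min profile (the paper's condition \eqref{t_0}), which by Remark~\ref{thm:1.1} preserves the balance at $\tau$, and your signed-integrand identity with the weight $e^{\beta s}-e^{\beta t_0}$ is a clean repackaging of the paper's chained estimate \eqref{<}. The only (harmless) divergence is at the end, where the paper proves $\tau(\bar\alpha)=\tau(\alpha)$ by contradiction while you instead note that an earlier payoff time can only lower the cost further; both close the argument.
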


\begin{proof}
Fix $x>0$. First, observe that
$
\inf_{\alpha \in \mathcal B} J(x,\alpha)=\inf_{\alpha \in \mathcal B'} J(x,\alpha),
$
where 
\[
\mathcal B' := \{ \alpha\in\mathcal A: \exists\ t_0\ge 0\ \hbox{s.t.}\ \alpha_t= M(t) 1_{[0,t_0]}(t)+ m(t)1_{(t_0, \tau]}(t)\ \hbox{for a.e.}\ t\in[0,\tau]\}.
\]
Indeed, for any $\alpha\in\mathcal B$, the truncated strategy $\alpha'$, defined by $\alpha'_t:= \alpha_t 1_{[0,\tau]}(t)$, belongs to $\mathcal B'$ and satisfies $J(x,\alpha')=J(x,\alpha)$; conversely, for any $\alpha'\in\mathcal B'$, the extended strategy $\alpha$, defined by $\alpha_t:= \alpha'_t 1_{[0,\tau]}(t)+ m 1_{(\tau,T]}(t)$, belongs to $\mathcal A$ and satisfies $J(x,\alpha)=J(x,\alpha')$. For this reason, the remaining proof focuses on establishing $v(x)=\inf_{\alpha \in \mathcal B'} J(x,\alpha)$. To this end, it remains to show that for any $\alpha \in \mathcal A\setminus \mathcal B'$, there exists $\bar\alpha \in \mathcal B'$ such that $J(x,\bar\alpha)<J(x,\alpha)$, i.e.
\begin{equation}\label{toshow}
\int_{0}^{\tau(\bar\alpha)} e^{-rt} \bar\alpha_t dt + \omega e^{-r\tau(\bar\alpha)}b^{\bar\alpha}_{\tau(\bar\alpha)} < \int_{0}^{\tau(\alpha)} e^{-rt} \alpha_t dt + \omega e^{-r\tau(\alpha)} b^\alpha_{\tau(\alpha)},
\end{equation}
where $\tau$ in \eqref{tau} is denoted as $\tau(\bar\alpha)$ or $\tau(\alpha)$, and $b$ in \eqref{eq:1} as $b^{\bar\alpha}$ or $b^\alpha$, to emphasize their dependence on the chosen repayment strategy. 

For any $\alpha \in \mathcal A\setminus \mathcal B'$, the first claim is that there exists $0<t_0<\tau(\alpha)$ such that 
\begin{equation}\label{t_0}
\int_{0}^{t_0} (M(t)-\alpha_t)e^{-(r+\beta)t}  dt = \int_{t_0}^{\tau(\alpha)}(\alpha_t-m(t))e^{-(r+\beta)t} dt.
\end{equation}
Define $f:[0,\tau(\alpha)] \to \R$ by
$
f(t) := \int_{0}^{t} (M(s)-\alpha_s)e^{-(r+\beta)s}  ds - \int_{t}^{\tau(\alpha)}(\alpha_s-m(s))e^{-(r+\beta)s} ds. 
$
{} As $\alpha, M, m$ are all Lebesgue integrable, $f$ is by definition continuous. Also, $\alpha \notin \mathcal B'$ implies that $\alpha_t$ cannot be equal to $m(t)$ for a.e. $t\in[0,\tau(\alpha)]$, whence $f(0) =  - \int_{0}^{\tau(\alpha)}(\alpha_s-m(s))e^{-(r+\beta)s} ds<0$. Likewise, $\alpha_t$ cannot be equal to $M(t)$ for a.e. $t\in[0,\tau(\alpha)]$, implying $f(\tau(\alpha)) = \int_{0}^{\tau(\alpha)} (M(s)-\alpha_s)e^{-(r+\beta)s}  ds>0$. The continuity of $f$ thus ensures the existence of $0<t_0<\tau(\alpha)$ such that $f(t_0)=0$, i.e. \eqref{t_0} holds. 
Now, define $\bar\alpha:[0,T]\to[0,\infty)$ by
\begin{equation}\label{bar alpha}
  \bar \alpha_t := M(t)1_{[0,t_0]}(t) +m(t)1_{(t_0,\tau(\alpha)]}(t),\quad 0\le t\le T.  
\end{equation}
Observe that 
$
\tau(\bar\alpha)=\tau(\alpha)
$
. Indeed, by \eqref{t_0}, 
\begin{multline*}
\int_{0}^{\tau(\alpha)} e^{-(r+\beta)t} \bar \alpha_t dt = \int_{0}^{\tau(\alpha)} e^{-(r+\beta)t}  \alpha_t dt + \int_{0}^{t_0}(M(t)-\alpha_t) e^{-(r+\beta)t}  dt \\
 -  \int_{t_0}^{\tau(\alpha)}(\alpha_t-m(t))e^{-(r+\beta)t} dt = \int_{0}^{\tau(\alpha)} e^{-(r+\beta)t}  \alpha_tdt.
\end{multline*}
This fact, together with Remark~\ref{thm:1.1}, implies 
$
b^{\bar \alpha}_{\tau(\alpha)} = b^\alpha_{\tau(\alpha)}. 
$
{} The case $b^\alpha_{\tau(\alpha)}>0$ leads to $\tau(\alpha)=T$ and thus $b^{\bar \alpha}_{T}=b^{\bar \alpha}_{\tau(\alpha)}>0$, which readily implies $\tau(\bar\alpha)= T = \tau(\alpha)$. If $b^\alpha_{\tau(\alpha)}=0$, then $b^{\bar \alpha}_{\tau(\alpha)} = 0$ and thus $\tau(\bar\alpha)\le \tau(\alpha)$, thanks to the definition of $\tau$ in \eqref{tau}. If $\tau(\bar\alpha)< \tau(\alpha)\le T$, then $b^{\bar \alpha}_{\tau(\bar\alpha)} = 0$, again by \eqref{tau}. It then follows from the definition of $\bar\alpha$ and the formula of $b^{\bar \alpha}$ in \eqref{b formula} that $b^{\bar\alpha}_{\tau(\alpha)}< b^{\bar\alpha}_{\tau(\bar\alpha)}=0$, a contradiction. Thus, $\tau(\bar\alpha)=\tau(\alpha)$ as required, which 
implies $\bar\alpha\in\mathcal B$. 

It remains to show \eqref{toshow}. As a consequence of \eqref{t_0},
\begin{align}\label{<}
 e^{-\beta t_0} \int_{0}^{t_0} e^{-rt}(M(t)-\alpha_t)dt &< \int_{0}^{t_0} e^{-(r+\beta)t} (M(t)-\alpha_t)dt \nonumber\\
 &=\int_{t_0}^{\tau}e^{-(r+\beta)t}(\alpha_t-m(t))dt < e^{-\beta t_0}\int_{t_0}^{\tau}e^{-rt}(\alpha_t-m(t)) dt. 
\end{align}
It follows that  
\begin{multline*}
\int_{0}^{\tau(\bar\alpha)} e^{-rt} \bar \alpha_t dt +  \omega e^{-r\tau(\bar\alpha)}b_{\tau(\bar\alpha)}^{\bar \alpha} = \int_{0}^{\tau(\alpha)} e^{-rt}  \alpha_t dt +\int_{0}^{t_0} e^{-rt} (M(t) -\alpha_t) dt \\ 
-  \int_{t_0}^{\tau(\alpha)} e^{-rt} (\alpha_t-m(t))dt 
+ \omega e^{-r\tau(\alpha)}b_{\tau(\alpha)}^{\bar\alpha}
< \int_{0}^{\tau(\alpha)} e^{-rt}  \alpha_t dt + \omega e^{-r\tau(\alpha)}b_{\tau(\alpha)}^{\alpha},
\end{multline*}
where the equality follows from $\tau(\bar\alpha)=\tau(\alpha)$  and the definition of $\bar\alpha$ in \eqref{bar alpha}, and the inequality is due to \eqref{<} and $b^{\bar \alpha}_{\tau(\alpha)} = b^\alpha_{\tau(\alpha)}$. That is, \eqref{toshow} is established. 
\end{proof}


\subsection{Three Cases}
The following analysis distinguishes three cases, depending on how large the initial balance of the loan is. Consider the two useful thresholds
\begin{equation}\label{x min, max}
\underline{x}:= \int_{0}^{T} e^{-(r+\beta)s}  m(s) ds\quad \hbox{and}\quad \overline{x}:= \int_{0}^{T} e^{-(r+\beta)s}  M(s) ds. 
\end{equation}

The first case is that of an initial balance $x>0$ of the loan so large that even maximum payments cannot pay it off by time $T$. 

\begin{proposition}\label{prop:x large}
Fix $x> \overline x$ and recall $t_\oc\in [0,T)$ defined in \eqref{eq:critical}.
Then, $\alpha^*\in\mathcal A$ defined by
\begin{equation}\label{alpha^* x large}
 \alpha^*_t := M(t) 1_{[0,t_\oc]}(t)+m(t) 1_{(t_\oc,T]}(t),\quad 0\le t\le T, 
\end{equation}
is an optimal control. Moreover, $\tau(\alpha^*)=T$ and $v(x) = v_1(x)$, with $v_1$ defined as in \eqref{v_1}.
\end{proposition}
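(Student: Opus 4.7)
The plan is to exploit the fact that $x > \overline{x}$ forces $\tau(\alpha) = T$ for every admissible strategy, reducing the problem to a linear-in-$\alpha$ functional that can be minimized pointwise in $t$.

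First I would verify that $\tau(\alpha) = T$ for all $\alpha \in \mathcal{A}$. By Remark~\ref{thm:1.1} and the constraint $\alpha_s \le M(s)$, for every $t \in [0,T]$,
\[ b^\alpha_t = e^{(r+\beta)t}\left(x - \int_0^t e^{-(r+\beta)s} \alpha_s\, ds\right) \ge e^{(r+\beta)t}(x - \overline{x}) > 0, \]
so by \eqref{tau} the stopping time equals $T$. In particular, $\tau(\alpha^*) = T$ as claimed, and the second term of $J$ in \eqref{J} may be evaluated at $T$ using the closed form for $b^\alpha_T$.

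Second, I would substitute the expression for $b^\alpha_T$ into \eqref{J} to obtain
\[ J(x,\alpha) = \omega e^{\beta T} x + \int_0^T \alpha_t\, e^{-rt}\bigl(1 - \omega e^{\beta(T-t)}\bigr)\, dt. \]
Since only the integral depends on $\alpha$, and the constraint $m(t) \le \alpha_t \le M(t)$ is imposed separately at each $t$, the infimum is attained by a pointwise choice. The weight $1 - \omega e^{\beta(T-t)}$ is a strictly increasing function of $t$ that vanishes precisely at $t = T + (\log\omega)/\beta$. By \eqref{eq:critical}, if this root is positive it equals $t_\oc$, in which case the weight is strictly negative on $[0,t_\oc)$ and strictly positive on $(t_\oc, T]$; if the root is nonpositive then $t_\oc = 0$ and the weight is nonnegative throughout $[0,T]$.

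Third, the pointwise minimizer sets $\alpha_t = M(t)$ wherever the weight is negative and $\alpha_t = m(t)$ wherever the weight is nonnegative. In both cases this matches $\alpha^*$ in \eqref{alpha^* x large}; the choice at the single point $t = t_\oc$ is immaterial. Since this minimizer automatically lies in $\mathcal{A}$, it is globally optimal and no appeal to Lemma~\ref{pro:2.1} is needed. Substituting $\alpha^*$ back into the displayed expression for $J(x,\alpha)$, splitting the integrals at $t_\oc$, and grouping terms yields precisely the formula \eqref{v_1} for $v_1(x)$. I do not anticipate any real obstacle: once the identity $\tau(\alpha) = T$ is fixed, the argument is just a sign analysis of the scalar function $t \mapsto 1 - \omega e^{\beta(T-t)}$ followed by a direct computation.
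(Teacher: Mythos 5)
Your proof is correct, but it takes a genuinely different route from the paper's. The paper first invokes Lemma~\ref{pro:2.1} (a global rearrangement argument) to reduce the search to the one-parameter family of max-then-min strategies, writes the cost as a scalar function $f(t_0)$, and locates its minimum by computing $f'(t_0)=e^{-rt_0}(M(t_0)-m(t_0))\bigl(1-\omega e^{\beta(T-t_0)}\bigr)$. You instead observe that once $x>\overline{x}$ forces $\tau(\alpha)=T$ for every admissible $\alpha$, the objective becomes the affine functional $J(x,\alpha)=\omega e^{\beta T}x+\int_0^T\alpha_t e^{-rt}\bigl(1-\omega e^{\beta(T-t)}\bigr)\,dt$, which under the decoupled box constraints $m(t)\le\alpha_t\le M(t)$ is minimized pointwise by the bang-bang rule dictated by the sign of the weight. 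The two arguments hinge on the same scalar function $1-\omega e^{\beta(T-t)}$, but yours is self-contained and bypasses Lemma~\ref{pro:2.1} entirely, which is a genuine simplification for this regime; it is also slightly more careful than the paper in verifying $\tau(\alpha)=T$, since you bound $b^\alpha_t\ge e^{(r+\beta)t}(x-\overline{x})>0$ for \emph{all} $t$, not merely at $t=T$. The trade-off is that the pointwise argument relies essentially on the horizon being fixed at $T$ independently of $\alpha$; it does not extend to the small- and intermediate-balance cases (Propositions~\ref{prop:x small} and \ref{prop:x intermediate}), where $\tau$ depends on the strategy and the objective is no longer linear in $\alpha$ --- which is precisely why the paper develops the reduction to the class $\mathcal{B}$ as shared machinery for all three regimes.
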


\begin{proof}
Note that with $x> \overline x$ even maximum payments, i.e. $\tilde\alpha_t:= M(t)$ for all $0\le t\le T$, cannot pay off the debt by time $T$. Indeed, 
$
b^{\tilde\alpha}_T= 
e^{(r+\beta)T}\left(x-\int_{0}^{T}e^{-(r+\beta)s} M(s) ds\right) >0
$
{} by \eqref{b formula}, whence 
$
b^\alpha_T>0 \hbox{ and } \tau(\alpha)=T, \hbox{ for all } \alpha\in\mathcal A. 
$
{} Thus, a strategy of the form 
\begin{equation}\label{alpha^*}
   \alpha^*_t := M(t) 1_{[0,t_0]}(t)+m(t) 1_{(t_0,T]}(t),\quad 0\le t\le T, 0\le t_0\le T, 
\end{equation} 
satisfies
\begin{gather}\label{=f(t^*)}
J(x,\alpha^*) = \int_{0}^{T} e^{-rt} \alpha^*_t dt + \omega e^{-rT} b^{\alpha^*}_T = f(t_0), 
\quad\text{where}\\
\label{f}
f(t) := \int_{0}^{t} e^{-rs}  M(s) ds+ \int_{t}^{T} e^{-rs}  m(s) ds
+ \omega e^{\beta T}\left(x- \int_{0}^{t} e^{-(r+\beta)s}  M(s) ds  -\int_{t}^{T} e^{-(r+\beta)s}  m(s) ds \right).
\end{gather}
Note that the second equality in \eqref{=f(t^*)} follows from \eqref{alpha^*} and Remark~\ref{thm:1.1}. By direct calculation,
$
f'(t) = e^{-r t}(M(t)- m(t)) \left(1 - \omega e^{\beta (T-t)} \right),
$
{} which shows that $f$ is strictly decreasing for $t< T+ \frac{\ln \omega}{\beta}$ and strictly increasing for $t>T+ \frac{\ln \omega}{\beta}$. It then follows from \eqref{=f(t^*)} that by taking $t_0 = t_\oc$ in \eqref{eq:critical}, $\alpha^*$ in \eqref{alpha^*} attains $\inf_{\alpha\in\mathcal B}J(x,\alpha)=v(x)$, where the equality follows from Lemma \ref{pro:2.1}. 
\end{proof}

Next, consider the case where the initial balance $x>0$ of the loan is so small that even minimum payments can pay it off by time $T$. 

\begin{proposition}\label{prop:x small}
Fix $0<x\le \underline x$. 
Consider the unique $t_M\in (0,T]$ such that 
\begin{equation}\label{t_2}
x=  \int_{0}^{t_M} e^{-(r+\beta)s}  M(s) ds.
\end{equation}
Then, $\alpha^*\in\mathcal A$ defined by $\alpha^*_t=M(t)$, $0\le t\le T$, 
is an optimal control. Moreover, $\tau(\alpha^*) = t_M<T$ and $v(x)= v_2(x)$, with $v_2$ defined as in \eqref{v_2}. 
\end{proposition}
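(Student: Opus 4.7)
The plan is to invoke Lemma~\ref{pro:2.1} to restrict to strategies in $\mathcal B$, verify that the candidate $\alpha^*_t=M(t)$ pays off the loan at time $t_M$ with cost $v_2(x)$, and then show $J(x,\alpha)\ge v_2(x)$ for every $\alpha\in\mathcal B$ by comparing the two discounted payment schedules through an integration-by-parts estimate.

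First I would observe that $F(s):=\int_0^s e^{-(r+\beta)u}M(u)\,du$ is continuous and strictly increasing from $F(0)=0$ to $F(T)=\overline x$; since $0<x\le \underline x<\overline x$, there is a unique $t_M\in(0,T)$ with $F(t_M)=x$. Applying Remark~\ref{thm:1.1} to $\alpha^*_t=M(t)$ gives $b^{\alpha^*}_{t_M}=e^{(r+\beta)t_M}(x-F(t_M))=0$, hence $\tau(\alpha^*)=t_M<T$ and the forgiveness term in $J$ vanishes, so $J(x,\alpha^*)=\int_0^{t_M}e^{-rs}M(s)\,ds=v_2(x)$.

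Next, for any $\alpha\in\mathcal B$ set $\tau:=\tau(\alpha)$ and extract two bounds. Since $\alpha\ge m$ pointwise, $\int_0^T e^{-(r+\beta)s}\alpha_s\,ds\ge\underline x\ge x$, which by Remark~\ref{thm:1.1} forces $b^\alpha_T\le 0$; hence $\tau\le T$ and $b^\alpha_\tau=0$, equivalently $\int_0^\tau e^{-(r+\beta)s}\alpha_s\,ds=x$. Since $\alpha\le M$, the function $G(s):=\int_0^s e^{-(r+\beta)u}\alpha_u\,du$ satisfies $G\le F$ on $[0,T]$, and $G(\tau)=x=F(t_M)$ together with the monotonicity of $F$ gives $\tau\ge t_M$. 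Setting $\tilde F(s):=F(s\wedge t_M)$, one then has $G(0)=\tilde F(0)=0$, $G(\tau)=\tilde F(\tau)=x$, and $G(s)\le\tilde F(s)$ for all $s\in[0,\tau]$ (the inequality is $G\le F=\tilde F$ on $[0,t_M]$, and $G(s)\le G(\tau)=x=\tilde F(s)$ on $(t_M,\tau]$).

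Finally, I rewrite both costs as Lebesgue--Stieltjes integrals, $J(x,\alpha)=\int_0^\tau e^{\beta s}\,dG(s)$ and $v_2(x)=\int_0^\tau e^{\beta s}\,d\tilde F(s)$ (using that $\tilde F$ is constant past $t_M$), and integrate by parts:
\[
J(x,\alpha)-v_2(x)=\bigl[e^{\beta s}(G-\tilde F)(s)\bigr]_0^\tau+\int_0^\tau \beta e^{\beta s}\bigl(\tilde F(s)-G(s)\bigr)\,ds=\int_0^\tau \beta e^{\beta s}\bigl(\tilde F(s)-G(s)\bigr)\,ds\ge 0,
\]
since the boundary contribution vanishes. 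The main obstacle is simply keeping the signs straight in this bookkeeping: one needs $\alpha\le M$ to obtain $G\le\tilde F$ pointwise and the equality of the total present values $G(\tau)=\tilde F(\tau)=x$ to kill the boundary term, after which the nonnegative integrand fixes the direction of the inequality and yields $v(x)=v_2(x)$ via Lemma~\ref{pro:2.1}.
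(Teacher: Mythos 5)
Your proof is correct, but it follows a genuinely different route from the paper's. The paper first invokes Lemma~\ref{pro:2.1} to restrict attention to max-then-min strategies parametrized by the switch time $t_0$, then shows that the payoff time $t_0\mapsto\tau(t_0)$ is continuous and a.e.\ differentiable (via the Lebesgue differentiation theorem), computes $\tau'(t_0)$ explicitly, and finally differentiates the cost $g(t_0)$ to find $g'(t_0)=e^{-rt_0}(M(t_0)-m(t_0))(1-e^{\beta(\tau(t_0)-t_0)})<0$, so that the minimum is attained at $t_0=t_M$, i.e.\ at the pure max strategy. You instead exploit the observation that every admissible strategy pays the loan off exactly (so all cumulative discounted-at-rate-$(r+\beta)$ payment schedules $G$ terminate at the same value $x$), and then compare $G$ with the max schedule $\tilde F$ by integration by parts against the increasing weight $e^{\beta s}$; the pointwise domination $G\le\tilde F$ together with the matching endpoints kills the boundary term and yields $J(x,\alpha)-v_2(x)=\int_0^\tau\beta e^{\beta s}(\tilde F-G)\,ds\ge 0$. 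This is a first-order-dominance/rearrangement argument rather than a first-order-condition argument. It is more elementary: it sidesteps the delicate differentiability analysis of $\tau(t_0)$ entirely, and in fact it does not even need Lemma~\ref{pro:2.1} — your bounds use only $m\le\alpha\le M$, so the comparison holds for every $\alpha\in\mathcal A$ directly, making the initial restriction to $\mathcal B$ superfluous. What the paper's derivative computation buys in exchange is reusability: the same functions $g$ and $\tau(t_0)$ and the sign of $g'$ are recycled verbatim in Proposition~\ref{prop:x intermediate} and Lemma~\ref{lem:v1-v2}, whereas your argument is self-contained to this proposition. One cosmetic point: since $b^\alpha_\tau=0$ here, you should note explicitly (as you do implicitly) that the forgiveness term in $J$ vanishes for \emph{every} admissible $\alpha$, which is what licenses writing $J(x,\alpha)=\int_0^\tau e^{\beta s}\,dG(s)$ with no residual tax term.
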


\begin{proof}
As $0<x\le \underline x$, even minimum payments ($\tilde\alpha_t:= m(t)$ for all $0\le t\le T$) pay off the debt by time $T$. Indeed, 
$
b^{\tilde\alpha}_T = 
e^{(r+\beta)T}\left(x-\int_{0}^{T}e^{-(r+\beta)s}m(s)  ds\right) \le 0
$
{} by \eqref{b formula}, whence
\begin{equation}\label{b=0}
b^\alpha_\tau = 0\quad \hbox{and}\quad \tau(\alpha)\le T,\qquad \hbox{for all $\alpha\in\mathcal A$.} 
\end{equation}
Also, observe that $0<x\le  \underline x$ and $0<m(t)<M(t)$ readily imply the existence of a unique $t_M\in(0,T]$ such that \eqref{t_2} holds. 


Now, focus on strategies $\alpha^*$ as in \eqref{alpha^*}, with $0\le t_0\le T$. For each $0\le t_0\le T$, since $b^{\alpha^*}_\tau = 0$ by \eqref{b=0}, it follows that $x = \int_{0}^{\tau} e^{-(r + \beta)s} \alpha^*_s ds$, in view of \eqref{b formula}. This fact, together with \eqref{alpha^*} and \eqref{t_2}, implies
\begin{align}\label{x=int}
x &=  \int_{0}^{t_0} e^{-(r+\beta)s}  M(s) ds +  \int_{t_0}^{\tau} e^{-(r+\beta)s}  m(s) ds,\quad \hbox{for}\ 0\le t_0\le t_M. 
\end{align}
Thus, $\tau$ is a function $t_0\mapsto\tau(t_0)$, $0\le t_0\le t_M$. By the strict positivity of $M$ and $m$, \eqref{x=int} indicates that $\tau(t_0)=t_0 + \eta(t_0)$, where
\begin{equation}\label{tau>t_0}
t_0\mapsto \eta(t_0)\ \ \hbox{is strictly decreasing on $[0,t_M]$ with $\eta(t_M)=0$}. 
\end{equation}
It then follows from the decreasing property of $\eta$ and \eqref{x=int} that $t_0\mapsto\tau(t_0)$ is differentiable a.e. Indeed, given $0\le t_0\le t_M$, for any $h\in\R$ such that $0<t_0+h<t_M$, \eqref{x=int} entails
\[
\int_{0}^{t_0}\!\!\!\! e^{-(r+\beta)s}  M(s) ds +  \int_{t_0}^{\tau(t_0)}\!\!\!\! e^{-(r+\beta)s}  m(s) ds=\int_{0}^{t_0+h}\!\!\!\! e^{-(r+\beta)s}  M(s) ds +  \int_{t_0+h}^{\tau(t_0+h)} \!\!\!\! e^{-(r+\beta)s}  m(s) ds, 
\]
which reduces to
\begin{equation}\label{reduce to}
\int_{\tau(t_0+h)}^{\tau(t_0)} e^{-(r+\beta)s}  m(s) ds = \int_{t_0}^{t_0+h} e^{-(r+\beta)s}(M(s)-m(s)) ds.
\end{equation}
Thus, the right-hand side above vanishes as $h\to 0$, hence $\tau(t_0+h)\to \tau(t_0)$, i.e. $t_0\mapsto\tau(t_0)$ is continuous and so is $t_0\mapsto\eta(t_0)$. By the continuity of $\eta$, the Lebesgue differentiation theorem implies that, for a.e. $t_0\in [0,t_M]$,
\begin{align*}
\lim_{h\to 0} \frac{1}{\tau(t_0)-\tau(t_0+h)} \int_{\tau(t_0+h)}^{\tau(t_0)} e^{-(r+\beta)s}  m(s) ds =& e^{-(r+\beta)\tau(t_0)}  m(\tau(t_0)),\\
\lim_{h\to 0} \frac{1}{h} \int_{t_0}^{t_0+h} e^{-(r+\beta)s}  (M(s)-m(s)) ds =& e^{-(r+\beta)t_0}  (M(t_0)-m(t_0)).
\end{align*}
Dividing the second equality above by the first one and recalling \eqref{reduce to} yields
\begin{equation}\label{tau'}
\tau'(t_0) = \lim_{h\to\ 0} \frac{\tau(t_0+h)-\tau(t_0)}{h}= - e^{(r+\beta)(\tau(t_0)-t_0)}\frac{M(t_0)-m(t_0)}{m(\tau(t_0))}. 
\end{equation}
Thanks to \eqref{alpha^*} and \eqref{b=0}
\begin{align}\label{=g}
J(x,\alpha^*) = \int_{0}^{\tau} e^{-rt} \alpha^*_t dt + \omega e^{-r\tau} b^{\alpha^*}_\tau  =  \int_{0}^{\tau} e^{-rt} \alpha^*_t dt=g(t_0),
\end{align}
where $g:[0,t_M]\to\R$ is defined as
\begin{equation}\label{g}
g(t_0) :=   \int_{0}^{t_0} e^{-rs}  M(s) ds +  \int_{t_0}^{\tau(t_0)} e^{-rs}  m(s) ds.
\end{equation}
By direct calculation, 
\begin{align}\label{g'}
g'(t_0) &= e^{-rt_0}(M(t_0)-m(t_0))+ e^{-r \tau(t_0)} m(\tau(t_0)) \tau'(t_0)\notag\\
&=  e^{-rt_0} (M(t_0)-m(t_0)) \left(1-e^{\beta(\tau(t_0)-t_0)}\right)<0,\quad \hbox{for a.e.}\ 0\le t_0<t_M. 
\end{align}
where the second line follows from \eqref{tau'} and the inequality is due to \eqref{tau>t_0}. This shows that $g(t_0)$, $0\le t_0\le t_M$, has a global minimum at $t_0=t_M$. 
Thus, it follows from \eqref{=g} that by taking $t_0=t_M$, $\alpha^*$ in \eqref{alpha^*}  attains $\inf_{\alpha\in\mathcal B}J(x,\alpha)=v(x)$, where the equality follows from Lemma \ref{pro:2.1}. Consequently, 
$
v(x) = J(x,\alpha^*)=g\left(t_M \right)=  \int_{0}^{t_M} e^{-rs}  M(s) ds, 
$
 where the last equality follows from \eqref{g} and \eqref{tau>t_0}. Finally, 
simply because $\tau(t_M)=t_M$ (again, by \eqref{tau>t_0}), one can without loss of generality take $\alpha^*_t=M(t)$ for all $0\le t\le T$.  
\end{proof}

Finally, consider the intermediate case of an initial balance $x>0$ small enough that maximum payments can pay off the debt by time $T$, but also large enough that minimum payments cannot pay it off by time $T$. 

\begin{proposition}\label{prop:x intermediate}
Let $\underline x< x\le \overline x$, $t_\oc\in [0,T)$ as in \eqref{eq:critical}, and define $x_\oc\in [\underline x,\overline x)$ as
\begin{equation}\label{wtx}
x_\oc:=  \int_{0}^{t_\oc} e^{-(r+\beta)s}  M(s) ds + \int_{t_\oc}^{T} e^{-(r+\beta)s}  m(s) ds
.
\end{equation}
\begin{enumerate}
\item
If $x> x_\oc$,  then 
\[
v(x) = v_1(x)\wedge v_2(x),
\]
where $v_1$ and $v_2$ are defined as in \eqref{v_1} and \eqref{v_2}, respectively. Furthermore, if $v_1(x)< v_2(x)$, $\alpha^*\in\mathcal A$ defined in \eqref{alpha^* x large} is an optimal control; otherwise, $\alpha^*\in\mathcal A$ defined by $\alpha^*_t=M(t)$, $0\le t\le T$, is an optimal control. 
\item
If $x\le x_\oc$, then 
$v(x) =v_2(x)$ with $v_2$ defined as in \eqref{v_2}.
Moreover, $\alpha^*\in\mathcal A$ defined by $\alpha^*_t=M(t)$, $0\le t\le T$, is an optimal control. 
\end{enumerate}
\end{proposition}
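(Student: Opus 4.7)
The plan is to invoke Lemma~\ref{pro:2.1} so as to restrict to the class $\mathcal B$ of max-min strategies parametrized by a switch time $t_0\in[0,T]$, and then to minimize $J(x,\alpha^*)$ over $t_0$ by pasting together the analyses of Propositions~\ref{prop:x large} and~\ref{prop:x small}. I would introduce
\[
\phi(t_0):=\int_0^{t_0}e^{-(r+\beta)s}M(s)\,ds+\int_{t_0}^{T}e^{-(r+\beta)s}m(s)\,ds,
\]
which is continuous and strictly increasing on $[0,T]$ (as $M>m$), with $\phi(0)=\underline x$, $\phi(t_\oc)=x_\oc$, and $\phi(T)=\overline x$. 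Since $\underline x<x\le\overline x$, there is a unique $\bar t_0\in(0,T]$ with $\phi(\bar t_0)=x$, and $\bar t_0\lessgtr t_\oc$ exactly as $x\lessgtr x_\oc$. This $\bar t_0$ is the critical switch time: by \eqref{b formula}, the max-min strategy with switch $t_0$ satisfies $b^{\alpha^*}_T>0$ (so $\tau=T$) when $t_0<\bar t_0$, and pays off the loan at some $\tau(t_0)\le T$ when $t_0\ge\bar t_0$.

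Next I would compute $J(x,\alpha^*)$ on each regime. For $t_0\in[0,\bar t_0)$, the balance at $T$ is strictly positive, and the calculation from the proof of Proposition~\ref{prop:x large} yields $J(x,\alpha^*)=f(t_0)$ as defined in \eqref{f}, which is strictly decreasing on $[0,t_\oc]$ and strictly increasing on $[t_\oc,T]$. For $t_0\in[\bar t_0,t_M]$, where $t_M\in(0,T]$ is determined by $x=\int_0^{t_M}e^{-(r+\beta)s}M(s)\,ds$, the analysis of Proposition~\ref{prop:x small} applies to give $J(x,\alpha^*)=g(t_0)$ as in \eqref{g}, which is strictly decreasing. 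The two branches agree at $t_0=\bar t_0$: the tax term in $f$ vanishes because $\phi(\bar t_0)=x$, while $\tau(\bar t_0)=T$ in $g$, so both expressions reduce to $\int_0^{\bar t_0}e^{-rs}M(s)\,ds+\int_{\bar t_0}^{T}e^{-rs}m(s)\,ds$, making $t_0\mapsto J(x,\alpha^*)$ continuous on $[0,t_M]$; beyond $t_M$ the strategy collapses to pure max and $J$ takes the constant value $v_2(x)$.

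The two assertions then follow by identifying the minimizer. If $x>x_\oc$, so $\bar t_0>t_\oc$, then $f$ attains its minimum on $[0,\bar t_0]$ at the interior critical point $t_\oc$, yielding $v_1(x)$, while $g$ attains its minimum on $[\bar t_0,t_M]$ at $t_M$, yielding $v_2(x)$; thus $v(x)=v_1(x)\wedge v_2(x)$, with the optimizer being $\alpha^1$ or $\alpha^2$ accordingly. If instead $x\le x_\oc$, so $\bar t_0\le t_\oc$, the monotonicities of $f$ and $g$ align: $f$ is strictly decreasing on $[0,\bar t_0]$ and $g$ is strictly decreasing on $[\bar t_0,t_M]$, so $J$ is monotonically decreasing on $[0,t_M]$ and attains its minimum at $t_M$, giving $v(x)=v_2(x)$ with optimizer the pure-max strategy. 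The main subtlety is verifying that the two-piece function $t_0\mapsto J(x,\alpha^*)$ glues continuously at $\bar t_0$ so that the piecewise optimization yields the global minimum; once this is checked, every monotonicity claim is an immediate consequence of the derivative computations for $f'$ in Proposition~\ref{prop:x large} and $g'$ in Proposition~\ref{prop:x small}.
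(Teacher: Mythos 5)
Your proposal is correct and follows essentially the same route as the paper: after reducing to the max-min class via Lemma~\ref{pro:2.1}, the paper also introduces the threshold switch time (your $\bar t_0$, its $\tilde t$) solving $x=\int_0^{\tilde t}e^{-(r+\beta)s}M(s)\,ds+\int_{\tilde t}^{T}e^{-(r+\beta)s}m(s)\,ds$, splits $\mathcal B$ according to whether $b_T^\alpha>0$ (i.e., $t_0<\tilde t$) or not, and compares $f(t_\oc\wedge\tilde t)$ with $g(t_M)$ using the monotonicity of $f$ and $g$ and the matching value $f(\tilde t)=g(\tilde t)$ — exactly your continuous gluing argument.
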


\begin{proof}
As $\underline x < x\le \overline x$ and $0<m(t)<M(t)$, there exists a unique $\tilde t\in (0, T]$ such that 
\begin{equation}\label{t^*}
x= \int_{0}^{\tilde t} e^{-(r+\beta)s}  M(s) ds + \int_{\tilde t}^{T} e^{-(r+\beta)s}  m(s) ds.
\end{equation}
Thus, $\tilde t\le t_M$ by \eqref{t^*} and the definition of $t_M>0$ in \eqref{v_2}.  
Now, decompose $\mathcal B$ in \eqref{B} into $\B_1:= \{\alpha\in\B:b^{\alpha}_T>0\}$ and $\B_2:= \{\alpha\in\B:b^{\alpha}_T\le 0\}$. 
In view of Remark~\ref{thm:1.1} and \eqref{t^*}, 
\begin{equation}\label{B_1, B_2}
\B_1= \{\alpha\in\B: 0\le t_0< \tilde t \}\quad \hbox{and}\quad \B_2= \{\alpha\in\B: \tilde t\le t_0\le T\}.
\end{equation}
For any $\alpha\in \B_1$, argue as in \eqref{=f(t^*)} to obtain that $J(x,\alpha) = f(t_0)$, where $f:\R\to\R$ is defined as in \eqref{f}. As shown after \eqref{f}, $f(t)$ is strictly decreasing for $t< T+ \frac{\ln \omega}{\beta}$ and strictly increasing for $t>T+ \frac{\ln \omega}{\beta}$. Thus, \eqref{B_1, B_2} implies that 
\begin{equation}\label{B_1 value}
\inf_{\alpha\in\B_1} J(x,\alpha)= f\left(t_\oc\wedge \tilde t\right),
\end{equation}
where $t_\oc\in[0,T)$ is defined as in \eqref{eq:critical}. 
For any $\alpha\in \B_2$, argue as in \eqref{=g} to obtain $J(x,\alpha) = g(t_0\wedge t_M)$, where $g:[0,t_M]\to\R$ is defined as in \eqref{g}. As shown below \eqref{g}, $g(t)$ is strictly decreasing for $t< t_M$. Thus, \eqref{B_1, B_2} and $\tilde t\le t_M$ imply that 
\begin{equation}\label{B_2 value}
\inf_{\alpha\in\B_2} J(x,\alpha) = g(t_M). 
\end{equation}
In view of $\tilde t\le t_M$, note also that 
$
g(t_M) \le g(\tilde t) = f(\tilde t),
$
{} where the equality follows from the definitions of $f$ and $g$ (\eqref{f} and \eqref{g}) and \eqref{t^*}. 
Now, by Lemma \ref{pro:2.1}, \eqref{B_1 value}, and \eqref{B_2 value}, 
\[
v(x) = \inf_{\alpha\in\B} J(x,\alpha) = f\left(t_\oc\wedge \tilde t\right)\wedge g(t_M)=
\begin{cases}
f\left(t_\oc\right)\wedge g(t_M)= v_1(x)\wedge v_2(x), &\quad \hbox{if}\ t_\oc<\tilde t,\\
g(t_M)=v_2(x), &\quad \hbox{if}\ t_\oc\ge \tilde t,
\end{cases} 
\]
where the third equality exploits $g(t_M) \le f(\tilde t)$, and $v_1$ and $v_2$ are defined as in \eqref{v_1} and \eqref{v_2}, respectively. Note from \eqref{t^*} and \eqref{wtx} that $t_\oc<\tilde t$ if and only if $x_c<x$. The desired result thus follows from the previous equality.
\end{proof}

In summary, Propositions~\ref{prop:x large}, \ref{prop:x small}, and \ref{prop:x intermediate} altogether demonstrate that the optimal strategy must be either $\alpha^{1}_t := M(t) 1_{[0,t_\oc]}(t)+m(t) 1_{(t_\oc,T]}(t)$ or $\alpha^{2}_t := M(t)$. Specifically, 
(i) if $\alpha^{1}$ can pay off the balance $x$ by time $T$ (i.e. $x\le x_\oc$), Proposition~\ref{prop:x small} and \ref{prop:x intermediate} (ii) state that it is best to pay off the debt as soon as possible, i.e. $\alpha^2$ is optimal;
(ii) if $\alpha^{1}$ cannot pay off the balance $x$ by time $T$ but $\alpha^2$ can (i.e. $x_\oc< x\le \overline x$), Proposition~\ref{prop:x intermediate} (i) states that one needs to compare the costs $v_1(x)$ and $v_2(x)$ to determine which one of $\alpha^1$ and $\alpha^2$ is optimal;
(iii) if $\alpha^{2}$ cannot pay off the balance $x$ by time $T$ (i.e. $x> \overline x$), Proposition~\ref{prop:x large} states that $\alpha^1$ is optimal. 
Thus, for the case $x_c< x\le \overline x$, we need to analyze $v_1(x)$ and $v_2(x)$ further to determine the optimal strategy. 

\begin{lemma}\label{lem:v1-v2}
(i)
$v_1(x)-v_2(x)$ is strictly decreasing on $[\hat x, \overline x]$, where
\begin{equation}\label{hat x}
\hat x:= \int_0^{t_\oc} e^{-(r+\beta)s} M(s) ds\in [0,\overline x). 
\end{equation}
(ii)
There exists a unique $x^*\in (\hat x,\overline x)$ such that $v_1(x^*)= v_2(x^*)$. Hence, $v_1(x)>v_2(x)$ for $x\in [\hat x,x^*)$ and $v_1(x)<v_2(x)$ for $x\in (x^*,\overline x]$. 
Moreover, $x^*$ is identified as in Theorem \ref{th:main} and satisfies $x^*>x_\oc$, with $x_\oc$ defined as in \eqref{wtx}. 
\end{lemma}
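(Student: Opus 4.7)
The plan is to reduce both parts to analyzing $v_1(x) - v_2(x)$ as a function of one variable.

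For part (i), the key observation is that $v_1$ is affine in $x$ with slope $\omega e^{\beta T}$ (read directly from \eqref{v_1}), while $v_2$ depends on $x$ only through $t_M(x)$ defined implicitly by $x = \int_0^{t_M} e^{-(r+\beta)s} M(s) ds$. Implicit differentiation gives $dt_M/dx = e^{(r+\beta)t_M}/M(t_M)$, whence $v_2'(x) = e^{\beta t_M(x)}$. As $x$ traverses $[\hat x, \overline x]$, $t_M(x)$ sweeps $[t_\oc, T]$, so $(v_1 - v_2)'(x) = \omega e^{\beta T} - e^{\beta t_M(x)}$. When $t_\oc > 0$, the defining identity $t_\oc = T + (\log\omega)/\beta$ yields $\omega e^{\beta T} = e^{\beta t_\oc}$, hence the derivative is zero at $\hat x$ and strictly negative on $(\hat x, \overline x]$; when $t_\oc = 0$, one has $\omega e^{\beta T} \le 1 < e^{\beta t_M(x)}$ for any $x > 0 = \hat x$. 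Either way, $v_1 - v_2$ is strictly decreasing on $[\hat x, \overline x]$.

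For part (ii), I would first evaluate $v_1 - v_2$ at two convenient points. At $x = \overline x$ (so $t_M = T$), direct substitution in \eqref{v_1} and \eqref{v_2} yields
\[
v_1(\overline x) - v_2(\overline x) = -\int_{t_\oc}^T e^{-rs}(M(s)-m(s))(1-\omega e^{\beta(T-s)})\,ds < 0,
\]
since the integrand is strictly positive on $(t_\oc, T)$ by the definition of $t_\oc$. At $x = x_\oc$, the formula \eqref{wtx} matches \eqref{t^*} with $\tilde t = t_\oc$, so the proof of Proposition~\ref{prop:x intermediate} gives $v_1(x_\oc) = f(t_\oc)$, and in particular all tax terms in \eqref{v_1} cancel, leaving $v_1(x_\oc) = \int_0^{t_\oc} e^{-rs} M(s)\,ds + \int_{t_\oc}^T e^{-rs} m(s)\,ds$. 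On the other hand $v_2(x_\oc) = g(t_M(x_\oc))$ with $g$ as in \eqref{g}. Since $M > m$, comparing the identities defining $x_\oc$ and $t_M(x_\oc)$ forces $t_M(x_\oc) > t_\oc$, while $\tau(t_\oc) = T$ at $x = x_\oc$ yields $g(t_\oc) = f(t_\oc) = v_1(x_\oc)$. The strict monotonicity of $g$ from \eqref{g'} then delivers $v_2(x_\oc) = g(t_M(x_\oc)) < g(t_\oc) = v_1(x_\oc)$.

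Combining part (i), the endpoint signs $v_1(x_\oc) - v_2(x_\oc) > 0 > v_1(\overline x) - v_2(\overline x)$, continuity, and $\hat x < x_\oc < \overline x$, the intermediate value theorem yields a unique $x^* \in (x_\oc, \overline x) \subset (\hat x, \overline x)$ with $v_1(x^*) = v_2(x^*)$. To identify $x^*$ with the formula in Theorem~\ref{th:main}, set $t^* := t_M(x^*)$, so that $x^* = \int_0^{t^*} e^{-(r+\beta)s} M(s)\,ds$ with $t^* \in (t_\oc, T)$; substituting into $v_1(x^*) = v_2(x^*)$ and canceling the common term $\int_0^{t_\oc} e^{-rs} M(s)\,ds$ and rearranging produces exactly the equation \eqref{t_2^*} that determines $t^*$.

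The main obstacle is the strict inequality $v_1(x_\oc) > v_2(x_\oc)$: the ``soft'' inequality follows from optimality arguments in Proposition~\ref{prop:x intermediate}, but strictness requires the strict monotonicity of $g$ established in Proposition~\ref{prop:x small}, which in turn relied on $\tau(t_0) > t_0$. Everything else reduces to careful bookkeeping and implicit differentiation.
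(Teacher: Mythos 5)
Your proposal is correct and follows essentially the same route as the paper: strict monotonicity of $v_1-v_2$ on $[\hat x,\overline x]$, sign evaluation at the endpoints, the intermediate value theorem, identification of $x^*$ via \eqref{t_2^*} with $t^*=t_M(x^*)$, and the comparison $v_1(x_\oc)=g(t_\oc)>g(t_M(x_\oc))=v_2(x_\oc)$ using \eqref{g'}. The only (cosmetic) differences are that the paper proves (i) by writing $v_1(x)-v_2(x)$ as the explicit integral \eqref{v1-v2} and invoking only the strict monotonicity of $t_M(x)$ and positivity of the integrand --- thereby avoiding the implicit differentiation of $t_M$, which for merely integrable $M$ needs an absolute-continuity/Lebesgue-differentiation justification --- and that it takes $\hat x$ rather than $x_\oc$ as the positive endpoint, proving $x^*>x_\oc$ afterwards by the same $g$-argument you use.
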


\begin{proof}
In view of \eqref{v_2}, $t_M>0$ is in fact a function of $x$, and is strictly increasing by definition. Henceforth, denote $t_M$ as $t_M(x)$ for clarity. As $t_M(\hat x) =t_\oc$ by construction, it follows that $t_M(x)>t_\oc$ for all $\hat x<x \le \overline x$. Thanks to the definitions of $v_1$ and $v_2$ in \eqref{v_1} and \eqref{v_2}, as well as the relation \eqref{t_2}, for any $\hat x\le x \le \overline x$, 
\begin{align}
v_1(x)-v_2(x) &= -\int_{t_\oc}^{t_M(x)} e^{-rs}  M(s) ds+ \int_{t_\oc}^{T} e^{-rs}  m(s) ds\nonumber\\
& + \omega e^{\beta T}\bigg(\int_{t_\oc}^{t_M(x)} e^{-(r+\beta)s}  M(s) ds  -\int_{t_\oc}^{T} e^{-(r+\beta)s}  m(s) ds \bigg)\nonumber &\\
&= -\int_{t_\oc}^{t_M(x)} e^{-rs}  M(s) \left(1-\omega e^{\beta(T-s)}\right) ds+ \int_{t_\oc}^{T} e^{-rs}  m(s) \left(1-\omega e^{\beta(T-s)}\right) ds. \label{v1-v2}
\end{align}
Note that $s> t_\oc= (T+\frac{\ln w}{\beta})^+$ if and only if 
$
1-\omega e^{\beta(T-s)}>0.
$
 Hence, because $M(s)$ and $1-\omega e^{\beta(T-s)}$ are strictly positive and $t_M(x)$ is strictly increasing, \eqref{v1-v2} implies that $v_1(x)-v_2(x)$ is strictly decreasing on $[\hat x, \overline x]$. Now, observe from \eqref{t_2} that $t_M(\overline x)= T$ and from \eqref{v1-v2} that 
\[
v_1(\overline x)-v_2(\overline x) = -\int_{t_\oc}^{T} e^{-rs}  (M(s)-m(s)) \left(1-\omega e^{\beta(T-s)}\right) ds <0,
\]
because $M(s)>m(s)$ and $1-\omega e^{\beta(T-s)}>0$. On the other hand, by $t_M(\hat x) =t_\oc$, \eqref{v1-v2} yields $v_1(\hat x)-v_2(\hat  x) = \int_{t_\oc}^{T} e^{-rs}  m(s) \left(1-\omega e^{\beta(T-s)}\right) ds >0$, again because $1-\omega e^{\beta(T-s)}>0$. As $v_1(x)-v_2(x)$ is strictly decreasing on $[\hat x, \overline x]$, there must exist $x^*\in (\hat x, \overline x)$ such that  $v_1(x^*)-v_2(x^*)=0$. Note that $x^*$ is identified by setting the right hand side of \eqref{v1-v2} to be zero, which leads to the characterization in Theorem \ref{th:main}. Now, in view of \eqref{hat x}, \eqref{wtx}, and \eqref{x min, max}, $\hat x < x_\oc<\overline x$ by definition. Observe from \eqref{v_1}, \eqref{wtx}, and \eqref{g} that
\[
v_1(x_\oc) = \int_{0}^{t_\oc} e^{-rs}  M(s) ds+ \int_{t_\oc}^{T} e^{-rs}  m(s) ds = g(t_\oc) > g(t_M(x_\oc))= v_2(x_\oc),
\]
where the inequality follows from the fact that $g:[0,t_M(x_\oc)]\to\R$ is minimized at $t_M(x_\oc)$ (recall \eqref{g'}) and the last equality is due to \eqref{g} and \eqref{v_2}. This fact readily implies $x_\oc<x^*$. 
\end{proof}

\begin{proof}[Proof of Theorem \ref{th:main}]
By Lemma~\ref{lem:v1-v2}, $x^*>x_\oc$, $v_1(x)>v_2(x)$ for $x\in [\hat x,x^*)$, and $v_1(x)<v_2(x)$ for $x\in (x^*,\overline x]$. The results of Proposition \ref{prop:x intermediate} are then simplified to $v(x) = v_1(x)$ for $x\in (x^*,  \overline x]$ and $v(x) = v_2(x)$ for $x\in (\overline x, x^*].$ 
Combining this fact with Propositions~\ref{prop:x large} and \ref{prop:x small} yields the claim. 
\end{proof}


\section{Proofs for Section~\ref{sec:simple}}\label{sec:proofsnew}

\def \fb{{\mathfrak b}}
\def \fJ{{\mathfrak J}}

To prove Lemma~\ref{lem:reduce to B}, we need to introduce additional notation. For any fixed $\ell\in [0,T)$, consider
\begin{equation}\label{F^elln}
\mathcal A^\ell := \{ \alpha:\text{$t\mapsto \alpha_t$ Lebesgue measurable, } m(t+\ell) \leq \alpha_t \leq M(t+\ell)\ \hbox{for a.e.}\ 0\le t\le T-\ell\}.
\end{equation}
Clearly, $\A^0 =\A$. Given $x>0$ and $\alpha\in\A^\ell$, we consider $\tau^\ell$ as in \eqref{tau}, with $T>0$ therein replaced by $T-\ell>0$, and $J^\ell(x,\alpha)$ as in \eqref{J}, with $\tau$ therein replaced by $\tau^\ell$. 

\begin{remark}\label{rem:J and J^ell}
Fix $x>0$ and $\alpha\in\A$. For any $0\le \ell<\tau$, set $y:= b^\alpha_{\ell}$ and define $\alpha^\ell\in\A^\ell$ by 
\begin{equation}\label{alpha^ell}
\alpha^\ell_s:=\alpha_{\ell+s}\quad \hbox{for $s\in[0,T-\ell]$}. 
\end{equation}
Then, in view of \eqref{J}, 
\begin{align*}
J(x,\alpha) &= \int_0^\ell e^{-rs} \alpha_s ds + \int_\ell^{\tau} e^{-rs} \alpha_s ds + e^{-r\tau} \omega b^\alpha_{\tau}\\
&= \int_0^\ell e^{-rs} \alpha_s ds + e^{-r\ell} \bigg(\int_0^{\tau^\ell} e^{-rs} \alpha^\ell_s ds + e^{-r\tau^\ell} \omega b^{\alpha^\ell}_{\tau^\ell}\bigg) = \int_0^\ell e^{-rs} \alpha_s ds +  e^{-r\ell} J^\ell(y,\alpha^\ell).
\end{align*}
\end{remark}

\begin{proof}[Proof of Lemma~\ref{lem:reduce to B}]
(i) Suppose that $\theta(\alpha) = T$. Observe from \eqref{t_alpha} that $\theta(m) \ge \theta(\alpha)=T$. That is, neither $\alpha_s$ nor $m(s)$ pays off the accumulated interest balance by time $T$, so that the principal remains untouched at time $T$. It follows that
\begin{align}
J(x,\alpha) -J(x,m) &= \int_0^T e^{-rt} \alpha_s ds + e^{-rT}\omega \left(x+ (r+\beta)xT - \int_0^T \alpha_s ds\right) \notag\\
&\hspace{0.2in}- \int_0^T e^{-rt} m(s) ds -e^{-rT}\omega \left(x+ (r+\beta)xT - \int_0^T m(s) ds\right) \notag\\
& = \int_0^T \left(e^{-rt} - \omega e^{-rT}\right) (\alpha_s - m(s)) ds \ge 0,\label{J-J}
\end{align} 
where we used the fact $\omega\in (0,1)$ for the inequality. 

(ii) Suppose that $\theta(\alpha) < T$. If $\alpha_s=m(s)$ for a.e.\ $s\in[0,\theta(\alpha)]$, $t_0 = \theta(\alpha)$ in \eqref{payment equal}. If $\alpha_s=M(s)$ for a.e.\ $s\in[0,\theta(\alpha)]$, $t_0=0$ in \eqref{payment equal}. If both $\{s\in[0,\theta(\alpha)]:\alpha_s> m(s)\}$ and $\{s\in[0,\theta(\alpha)]:\alpha_s< M(s)\}$ have positive measures, consider $f:[0,\theta(\alpha)] \to \R$ defined by
\begin{align*}
f(t) := \int_{0}^{t}  (\alpha_s-m(s))  ds - \int_{t}^{\theta(\alpha)} (M(s)-\alpha_s)  ds. 
\end{align*}
As $m(s)\le \alpha_s\le M(s)$ and $m(s)<M(s)$ for all $s\in[0,\theta(\alpha)]$, $f$ is by definition continuous and strictly increasing. Since $\{s\in[0,\theta(\alpha)]:\alpha_s< M(s)\}$ has a positive measure, $f(0) =  - \int_{0}^{\theta(\alpha)}(M(s)-\alpha_s) ds<0$. Likewise, the fact that $\{s\in[0,\theta(\alpha)]:\alpha_s> m(s)\}$ has a positive measure implies $f(\theta(\alpha)) = \int_{0}^{\theta(\alpha)} (\alpha_s-m(s)) ds>0$. Hence, there exists a unique $0<t_0<\theta(\alpha)$ such that $f(t_0)=0$, which is equivalent to \eqref{payment equal}.  
Observe that
\begin{align*}
\int_{0}^{t_0} e^{-rs} (\alpha_s-m(s))  ds \ge \int_{0}^{t_0}  e^{-rt_0}(\alpha_s-m(s))  ds &= \int_{t_0}^{\theta(\alpha)} e^{-rt_0} (M(s)-\alpha_s)  ds\\
& \ge  \int_{t_0}^{\theta(\alpha)} e^{-rs} (M(s)-\alpha_s)  ds,
\end{align*}
where the equality follows from \eqref{payment equal} (with both sides multiplied by $e^{-rt_0}$). This gives
\begin{equation}\label{cost smaller}
\int_{0}^{\theta(\alpha)} e^{-rs} \alpha_s ds \ge \int_{0}^{\theta(\alpha)} e^{-rs} \left(m(s)1_{[0, t_0]}(s) + M(s) 1_{(t_0,\theta(\alpha)]}(s)\right)  ds.
\end{equation}

Now, we claim that $\theta(\overline\alpha) = \theta(\alpha)$. By \eqref{payment equal} and the definition of $\overline\alpha\in\A$ in \eqref{mMalpha}, we have
\begin{equation}\label{payment equal'}
\int_0^{\theta(\alpha)}\alpha_s ds=\int_0^{\theta(\alpha)}\overline\alpha_s ds.
\end{equation}
Suppose that $\theta(\overline\alpha)> \theta(\alpha)$. By \eqref{payment equal'}, $p^{\overline{\alpha}}_{\theta(\alpha)} = p^\alpha_{\theta(\alpha)}  = x$. This, together with $\overline\alpha_s = \alpha_s$ for all $s\in(\theta(\alpha),T]$ (see \eqref{mMalpha}), implies that $p^{\overline{\alpha}}_{s} = p^\alpha_{s}$ for all $s\in[\theta(\alpha),T]$. By the definition of $\theta(\alpha)$ in \eqref{t_alpha}, there exist $\{t_n\}_{n\in\N}$ in $(\theta(\alpha),T]$ with $t_n\downarrow \theta(\alpha)$ such that $p^{\overline\alpha}_{t_n} = p^{{\alpha}}_{t_n} < x$ for all $n\in\N$. This shows that $\theta(\overline\alpha)\le \theta(\alpha)$, a contradiction. On the other hand, suppose that $\theta(\overline\alpha)< \theta(\alpha)$, which implies 
\begin{equation}\label{payment equal''}
\int_0^{\theta(\overline\alpha)}\overline\alpha_s ds = (r+\beta)x \theta(\overline\alpha) \ge \int_0^{\theta(\overline\alpha)}\alpha_s ds.
\end{equation}
If $\theta(\overline\alpha)< t_0$, since $\overline\alpha_s = m(s) \le \alpha_s$ for all $s\in [0,t_0]$, we must have $\theta(\alpha)\le \theta(\overline\alpha)$, a contradiction. With $\theta(\overline\alpha)\ge t_0$, \eqref{payment equal''} and \eqref{payment equal'} together yield $\int_{\theta(\overline\alpha)}^{\theta(\alpha)}\alpha_s ds \ge \int_{\theta(\overline\alpha)}^{\theta(\alpha)}\overline\alpha_s ds = \int_{\theta(\overline\alpha)}^{\theta(\alpha)}M(s)ds$. As $\alpha\in\A$, we must have $\alpha_s = M(s)$ for a.e.\ $s\in[\theta(\overline\alpha),\theta(\alpha)]$, so that $\int_{\theta(\overline\alpha)}^{\theta(\alpha)}\alpha_s ds = \int_{\theta(\overline\alpha)}^{\theta(\alpha)}\overline\alpha_s ds$. In view of \eqref{payment equal'}, this shows that \eqref{payment equal''} in fact holds as an equality. It follows that $p^\alpha_{\theta(\overline\alpha)} = p^{\overline{\alpha}}_{\theta(\overline\alpha)} = x$. This, along with $\alpha_s = M(s)=\overline\alpha_s$ for a.e.\ $s\in[\theta(\overline\alpha),\theta(\alpha)]$, implies that $p^\alpha_{s} = p^{\overline{\alpha}}_{s}$ for all $s\in[\theta(\overline\alpha),\theta(\alpha)]$. By the definition of $\theta(\overline\alpha)$ in \eqref{t_alpha}, there exist $\{t_n\}_{n\in\N}$ in $(\theta(\overline\alpha),\theta(\alpha)]$ with $t_n\downarrow \theta(\overline\alpha)$ such that $p^\alpha_{t_n} = p^{\overline{\alpha}}_{t_n}<x$ for all $n\in\N$. This shows that $\theta(\alpha)\le \theta(\overline\alpha)$, a contradiction. Hence, we conclude that $\theta({\overline\alpha}) = \theta(\alpha)$. 

Finally, recall the notation introduced above Remark~\ref{rem:J and J^ell}. Since \eqref{t_alpha} and \eqref{b} imply $b^\alpha_{\theta(\alpha)} = b^\alpha_0=x$, we deduce from Remark~\ref{rem:J and J^ell} that 
\begin{align*}
J(x,\alpha) &= \int_0^{\theta(\alpha)} e^{-rs}\alpha_s ds + e^{-r \theta(\alpha)} J^{\theta(\alpha)}(x,\alpha^{\theta(\alpha)})\\
& \ge \int_0^{\theta(\alpha)} e^{-rs}\overline\alpha_s ds + e^{-r \theta(\alpha)} J^{\theta(\alpha)}(x,\overline\alpha^{\theta(\alpha)})=J(x,\overline\alpha), 
\end{align*}
where the inequality follows from \eqref{cost smaller} and the definition of $\overline\alpha$ in \eqref{mMalpha}, and the last equality is due to $\theta({\overline\alpha})= \theta(\alpha)$ and Remark~\ref{rem:J and J^ell}. 
\end{proof}

\begin{proof}[Proof of Corollary \ref{cor:verylarge}]
(i) For any $\alpha\in\A$, since $\theta(\alpha)\ge \theta(M)=T$ implies $\theta(\alpha)=T$, we conclude from Lemma~\ref{lem:reduce to B} (i) that $J(x,m)\le J(x,\alpha)$. It follows that $J(x,m) = \inf_{\alpha\in\A} J(x,\alpha) = v(x)$, as desired. 

(ii) As $\theta(m) = 0$, there exists $\{t_n\}_{n\in\N}$ in $(0,T]$ with $t_n\downarrow 0$ such that $m(t_n) > (r+\beta) x$ for all $n\in\N$. For any $\alpha\in\A$, $\alpha_t \ge m(t) \ge m(t_n) > (r+\beta) x$ for $t\ge t_n$ because $m(t)$ is nondecreasing.
As $t_n\downarrow 0$, it follows that $\alpha_t > (r+\beta) x$ for all $t\in (0,T]$. Now, consider the process $\fb^\alpha$ defined by $d \fb^\alpha_t = ((r+\beta)\fb^\alpha_t-\alpha_t) dt$, $t\in(0,T]$, with $\fb^\alpha_0 =x$, as well as $\mathfrak{p}^\alpha_t := \inf_{0\le s\le t}\fb^\alpha_s$, $t\in[0,T]$. Thanks to $\alpha_t > (r+\beta) x$ for all $t\in (0,T]$, $t\mapsto\fb^\alpha_t$ is strictly decreasing, which in turn implies $\mathfrak{p}^\alpha_t = \fb^\alpha_t$ for all $t\in[0,T]$. It follows that $(\fb^\alpha,\mathfrak{p}^\alpha)$ satisfies \eqref{b}-\eqref{p} and is in fact the unique solution (as the coefficients in \eqref{b}-\eqref{p} are Lipschitz). Moreover, the dynamics reduces to \eqref{eq:1} due to $\mathfrak{p}^\alpha= \fb^\alpha$.  As such a reduction to \eqref{eq:1} holds for all $\alpha\in\A$, an optimal strategy $\alpha^*\in\A$ follows from Theorem~\ref{th:main}. 
\end{proof}

Before we proceed to prove Lemma \ref{lem:p decreases}, notice that it is essentially an extension of Lemma~\ref{pro:2.1} to the present case of simple interest. The goal is to establish an inequality similar to \eqref{toshow}, but the challenge here is that, in contrast to the case of compound interest, the total balance process $b$ is no longer tractable. Yet, a careful comparison between the total balances in these two cases shows that the desired inequality still holds with simple interest. 

In the following, to clearly distinguish between the two total balance processes (one in \eqref{eq:1} with compound interest, the other in \eqref{b}-\eqref{p} with simple interest), we will reserve ``$b$'' for the latter and use ``$\fb$'' to denote the former. 

\begin{remark}\label{rem:b<fb}
By the comparison theorem of ordinary differential equations, we deduce from the dynamics \eqref{b}-\eqref{p} (satisfied by $b$) and \eqref{eq:1} (satisfied by $\fb$) that $b^\alpha_t \le \fb^\alpha_t$, $t\ge 0$, for any $\alpha\in\A$. 
\end{remark}

\begin{proof}[Proof of Lemma \ref{lem:p decreases}]
Thanks to $\alpha\in \A\setminus \B_{[a,c]}$, the same arguments in the proof of Lemma~\ref{pro:2.1} (see \eqref{t_0} particularly) show that there exists $s_0\in (a,c)$ such that
\begin{align}\label{s_0 exists}
\int_{a}^{s_0} e^{-(r+\beta) t} M(t) dt + \int_{s_0}^c e^{-(r+\beta) t} m(t) dt= \int_a^c e^{-(r+\beta) t} \alpha_t dt.
\end{align}
In addition, by an estimation similar to \eqref{<}, this implies
\begin{equation}\label{heha}
\int_a^{s_0} e^{-r t} M(t) dt + \int_{s_0}^c e^{-r t} m(t) dt < \int_a^c e^{-r t} \alpha_t dt.
\end{equation}

Now, consider $\alpha_{(s_0)}\in\A$ defined as in \eqref{alMmal} (with $u=s_0$). 
Set $y := b^{\alpha}_a>0$. In the following, we will add the superscript ``$y$'' to the functions $t\mapsto b_t$ (resp.\ $t\mapsto \fb_t$) to emphasize its initial condition $b_0=y$ (resp.\ $\fb_0=y$). Observe that
\[
b^{y,(\alpha_{(s_0)})^a}_{c-a} \le \fb^{y,(\alpha_{(s_0)})^a}_{c-a} = \fb^{y,\alpha^a}_{c-a} = b^{y,\alpha^a}_{c-a}, 
\]
where we use the notation \eqref{alpha^ell} (with $\ell=a$). Note that the inequality above follows from Remark~\ref{rem:b<fb}, and the first equality above is due to \eqref{s_0 exists} and the explicit formula \eqref{b formula} (now satisfied by $\fb$). The second equality above stems from the fact that $t\mapsto p^\alpha_t$ is strictly decreasing on $[a,c]$, so that the dynamics in \eqref{b}-\eqref{p} and \eqref{eq:1} (now satisfied by $\fb$) coincide on $[a,c]$. 

Let us finish the proof by dealing with the following two cases separately. For the case where $b^{y,(\alpha_{(s_0)})^a}_{c-a} = b^{y,\alpha^a}_{c-a}$, we have
\begin{align}\label{baralpha>}
J(x,\alpha) &= \int_0^\tau e^{-r t} \alpha_t dt + e^{-r\tau}\omega b^{x,\alpha}_{\tau}\notag\\
&> \int_0^{a} e^{-r t} \alpha_t dt + \int_{a}^{s_0} e^{-r t} M(t) dt + \int_{s_0}^c e^{-r t} m(t) dt+ \int_{c}^{\tau} e^{-r t} \alpha_t dt + e^{-r\tau}\omega b^{y,(\alpha_{(s_0)})^a}_{\tau-a}\notag\\
&= J(x,\alpha_{(s_0)}). 
\end{align}
where the inequality follows from \eqref{heha} and $b^{y,(\alpha_{(s_0)})^a}_{t-a} = b^{y,\alpha^a}_{t-a}= b^{x,\alpha}_t$ for $t\ge c$, and the last equality holds by the definition of $\alpha_{(s_0)}$ in \eqref{alMmal}. Hence, we obtain $J(x,\alpha_{(u)}) < J(x,\alpha)$ by taking $u=s_0$. 
Now, consider the other case where $b^{y,(\alpha_{(s_0)})^a}_{c-a} < b^{y,\alpha^a}_{c-a}$. Observe from \eqref{alMmal} that $(\alpha_{(a)})_t = m(t)$ for all $t\in(a, c]$. It follows that
\[
b^{y,(\alpha_{(a)})^a}_{c-a} = b^{y,m^a}_{c-a} > b^{y,\alpha^a}_{c-a} > b^{y,(\alpha_{(s_0)})^a}_{c-a},
\]
where the first inequality follows from $\alpha_t \ge m(t)$ for all $t\in[a,c]$ and that $\{t\in[a,c] : \alpha_t > m(t)\}$ has a positive measure (as $\alpha\notin \B_{[a,c]}$). Then, by the continuity of $u\mapsto b^{y,(\alpha_{(u)})^a}_{c-a}$, there exists $u\in (a, s_0)$ such that $b^{y,(\alpha_{(u)})^a}_{c-a} = b^{y,\alpha^a}_{c-a}$. 
Note that \eqref{heha} still holds with $s_0$ therein replaced by $u$. Indeed, with $u\in (a, s_0)$, the left-hand side becomes even smaller with $u$ in place of $s_0$. It follows that the same calculation in \eqref{baralpha>} (with $s_0$ replaced by $u\in (a, s_0)$) still holds, which leads to $J(x,\alpha)>J(x,\alpha_{(u)})$. 
\end{proof}

\section{Conclusion}\label{sec:conc}

Federal student loans are complex debt contracts that enable borrowers to cap repayments to a fraction of their income, have the balance forgiven after at least twenty years in good standing, and accrue interest without capitalization while the loan is in negative amortization. On one hand, the prospects of simple accrued interest and debt forgiveness entices borrowers to enroll in income-driven schemes to minimize payments and maximize the forgiven amount. On the other hand, the high interest rate on student loans, combined with the long forgiveness horizon, may more than offset the benefits of income-driven repayments for small balances.

This paper finds the cost-minimizing repayment strategy that accounts for income-driven schemes and debt forgiveness. For very large or very small loan balances, our treatment also incorporates the effect of simple accrued interest with negative amortization.
A small loan should be paid as soon as possible, i.e., maximizing repayments until it is paid off. For a very large loan, enrollment in income-driven schemes is optimal, as the benefits of simple accrued interest and balance forgiveness override interest costs. 

\section*{Acknowledgments}
We thank Gur Huberman, Sara Biagini, and two anonymous referees of SIFIN for stimulating comments. We are especially grateful to Erik Kroll of Hilltop Financial Advisors for useful discussions and to the  editor-in-chief of SIFIN, Mete Soner, for encouraging us to deepen our analysis of student loans.

\bibliographystyle{siamplain}
\bibliography{studentloans}

\end{document}